\def\final{1}
	\DeclareMathAlphabet{\mathsf}{OT1}{cmss}{m}{n}
	\SetMathAlphabet{\mathsf}{bold}{OT1}{cmss}{bx}{n}
	\DeclareMathAlphabet{\mathtt}{OT1}{cmtt}{m}{n}
	\SetMathAlphabet{\mathtt}{bold}{OT1}{cmtt}{bx}{n}
	\definecolor{DarkGreen}{rgb}{0.15,0.5,0.15}
	\definecolor{DarkRed}{rgb}{0.6,0.2,0.2}
	\definecolor{DarkBlue}{rgb}{0.15,0.15,0.55}
	\definecolor{DarkPurple}{rgb}{0.4,0.2,0.4}
\newcommand{\mynote}[2]{\marginpar{\color{#1}\tiny #2}}
\newcommand{\mybignote}[2]{{\color{#1} $\langle \langle$ #2$\rangle \rangle$}}
\newcommand{\mynote}[2]{}
\newcommand{\mybignote}[2]{}
\newcommand{\jnote}[1]{\mynote{DarkRed}{Jon: {#1}}}
\newcommand{\tnote}[1]{\mynote{DarkBlue}{Thomas: {#1}}}
\newcommand{\ex}[2]{{\ifx&#1& \mathbb{E} \else \underset{#1}{\mathbb{E}} \fi \left[#2\right]}}
\newcommand{\pr}[2]{{\ifx&#1& \mathbb{P} \else \underset{#1}{\mathbb{P}} \fi \left[#2\right]}}
\newcommand{\var}[2]{{\ifx&#1& \mathsf{Var} \else \underset{#1}{\mathsf{Var}} \fi \left[#2\right]}}
\newcommand{\nope}[1]{}
\newcommand{\zo}{\{0,1\}}
\newcommand{\from}{:}
\renewcommand{\epsilon}{\varepsilon}
\newcommand{\eps}{\varepsilon}
\DeclareMathOperator*{\argmax}{arg\,max}
\newcommand{\E}{\mathbb{E}}
\newcommand{\N}{\mathbb{N}}
\newcommand{\cP}{\mathcal{P}}
\newtheorem{lem}{Lemma}
\newtheorem{defn}[lem]{Definition}
\newtheorem{cor}[lem]{Corollary}
\newtheorem{prop}[lem]{Proposition}
\newtheorem{thm}[lem]{Theorem}
\newtheorem{conj}[lem]{Conjecture}
\newcommand{\BetaD}{\mathsf{Beta}}
\newcommand{\BetaF}{\mathsf{B}}
\title{Tight Lower Bounds for Differentially Private Selection}
\author{Thomas Steinke\thanks{IBM Research -- Almaden. \href{mailto:topk@thomas-steinke.net}{\texttt{topk@thomas-steinke.net}}.} \and Jonathan Ullman\thanks{Northeastern University, College of Computer and Information Science.  \href{mailto:jullman@ccs.neu.edu}{\texttt{jullman@ccs.neu.edu}}}}
\begin{document}
\pagenumbering{gobble}
\maketitle

\begin{abstract}
A pervasive task in the differential privacy literature is to select the $k$ items of ``highest quality'' out of a set of $d$ items, where the quality of each item depends on a sensitive dataset that must be protected.  Variants of this task arise naturally in fundamental problems like feature selection and hypothesis testing, and also as subroutines for many sophisticated differentially private algorithms.

The standard approaches to these tasks---repeated use of the exponential mechanism or the sparse vector technique---approximately solve this problem given a dataset of $n = O(\sqrt{k}\log d)$ samples.  \jnote{EM DASHES DO NOT HAVE SPACES AROUND THEM!!!}  We provide a tight lower bound for some very simple variants of the private selection problem.  Our lower bound shows that a sample of size $n = \Omega(\sqrt{k} \log d)$ is required even to achieve a very minimal accuracy guarantee.

Our results are based on an extension of the fingerprinting method to sparse selection problems.  Previously, the fingerprinting method has been used to provide tight lower bounds for answering an entire set of $d$ queries, but often only some much smaller set of $k$ queries are relevant.  Our extension allows us to prove lower bounds that depend on both the number of relevant queries and the total number of queries.

\end{abstract}

\vfill
\newpage

\pagenumbering{arabic}
\section{Introduction} \label{sec:intro}
This work studies lower bounds on the sample complexity of differentially private selection problems. Informally, a selection problem consists of a large number of items each with a corresponding value and the task is to select a small subset of those items with large values.  In a private selection problem, the values of the items depend on a dataset of sensitive information that must be protected.


Selection problems appear in many natural statistical problems, including private multiple hypothesis testing \cite{DworkSZ15}, sparse linear regression~\cite{SmithT13, TalwarTZ15}, finding frequent itemsets~\cite{BhaskarLST10}, and as subroutines in algorithms for answering exponentially many statistical queries~\cite{BlumLR08, RothR10, HardtR10, GuptaRU12, Ullman15}, approximation algorithms~\cite{GuptaLMRT10}, and for establishing the generalization properties of differentially private algorithms~\cite{BassilyNSSSU16}. Selection problems appear in many different guises. As we are proving lower bounds, we consider the simplest possible form of selection problems.

More specifically, we consider the following simple selection problem motivated by applications in feature selection and hypothesis testing. There is an unknown probability distribution $\cP$ over $\zo^{d}$ with mean $p := \ex{}{\cP} \in [0,1]^d$, and our goal is to identify a set of coordinates whose mean is large---that is, a set $S \subset [d]$ of size $k \ll d$, such that $p^j$ is large for all $j \in S$.  To do this, we obtain $n$ independent samples $X_1, \cdots, X_n \in \{0,1\}^d$ from $\cP$. However, each $X_i$ corresponds to the private data of an individual.\footnote{For clarity, we use superscripts to denote the index of a column or item and subscripts to denote the index of a row or individual.} To protect this data, our procedure for selecting $S$ using the data $X_1, \cdots, X_n$ should satisfy \emph{differential privacy} \cite{DworkMNS06}, which is a strong notion of privacy requiring that no individual sample $X_i$ has a significant influence on the set of coordinates $S$ that we select.

For example, suppose $\mathcal{P}$ represents a population of patients suffering from some illness and each coordinate represents the presence of absence of a certain genetic trait. It would be useful for medical researchers to identify genetic traits that are unusually common in this population, but it is also essential not to reveal any individual's genetic information.  Thus the researchers would like to obtain genetic data $X_1,\dots,X_n$ from $n$ random members of this population and run a differentially private selection algorithm on this dataset. 

Without privacy, it is necessary and sufficient to draw $n \gtrsim \log d$ samples from $\cP$, and compute $\overline{X} = \frac1n \sum_i X_i$.  This ensures that $\| \overline{X} - p \|_\infty$ is small with high probability,\footnote{More precisely, if $n \geq \frac{\log(2d/\beta)}{2 \alpha^2}$, then $\pr{}{\|\overline{X} - p\|_\infty \leq \alpha}\geq1-\beta.$  Since we are proving negative results, we focus on the low-accuracy regime of $\alpha, \beta = \Omega(1)$, where $n = \Theta(\log d)$ samples are both necessary and sufficient.} so large coordinates of $\overline{X}$ correspond to large coordinates of $p$.  We can ensure differential privacy by adding carefully calibrated noise to the empirical mean $\overline{X}$ to obtain a noisy empirical mean $\tilde{X}$~\cite{DinurN03, DworkN04, BlumDMN05, DworkMNS06}.  Unfortunately, there are strong lower bounds showing that, unless $n \gtrsim \sqrt{d}$, there is no differentially private algorithm whose output $\tilde{X}$ gives a useful approximation to the population mean $p$~\cite{BunUV14, SteinkeU15b, DworkSSUV15}.

We can avoid this $\sqrt{d}$ lower bound if we only want to identify the $k$ approximately largest coordinates of $p$, rather than approximating all $d$ values.  Specifically, we can use the \emph{exponential mechanism}~\cite{McSherryT07} to identify an approximate largest coordinate of $p$, and then repeat on the other coordinates.  This algorithm provides non-trivial error using just $n \gtrsim \sqrt{k} \log d $ samples. This sample complexity is also achieved by the sparse vector algorithm~\cite{DworkNRRV09} (see~\cite[\S 3.6]{DworkR14} for a textbook treatment) and report noisy max \cite[\S 3.3]{DworkR14}.

Our first result shows that this sample-complexity is essentially the best possible for the approximate top-$k$ selection problem, even if $\cP$ is a product distribution.
\begin{thm}[Informal version of Corollary \ref{cor:topklb}] \label{thm:intro1}
Fix $n,d,k \in \N$ with $k \ll d$.  Let $M$ be a differentially private algorithm that takes a dataset $X \in (\zo^d)^n$ of $n$ samples, and outputs an indicator vector $M(X) \in \zo^d$ such that $\|M(X)\|_1 = k$.  Suppose that for every product distribution $\cP$ over $\zo^d$,
\begin{equation}
\ex{X \leftarrow \cP^n \atop M}{\sum_{j \in [d] ~:~ M(X)^j = 1} p^j} \geq \max_{ t \in \zo^d \atop \|t\|_1 = k} \sum_{j \in [d] ~:~ t^j = 1} p^j - \frac{k}{10},
\label{eqn:pop-acc}\end{equation}
where $p = \ex{}{\cP}$.  Then $n = \Omega(\sqrt{k} \log d)$.
\end{thm}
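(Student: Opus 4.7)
My plan is to extend the \emph{fingerprinting method} of Bun--Ullman--Vadhan and Steinke--Ullman --- which proves the $\sqrt d$ lower bound for privately releasing all $d$ one-way marginals --- to the sparse selection regime. Intuitively, since $M$ outputs only $k$ active coordinates, the effective dimension for fingerprinting drops from $d$ to $k$ (contributing a factor of $\sqrt{k}$), and the extra $\log d$ factor arises from the information-theoretic cost of naming those $k$ indices inside $[d]$, captured quantitatively by the log-density ratio between a ``heavy'' and ``light'' mixture component in the prior.

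Concretely, I would construct a hard prior on product distributions as follows. Independently for each coordinate $j \in [d]$, with probability $\Theta(k/d)$ declare $j$ to be \emph{heavy} and draw $p^j$ from a $\BetaD$ density supported bounded away from $0$; otherwise draw $p^j$ from a distribution concentrated near $0$. Standard concentration gives that the heavy set $H$ has size $|H| = \Theta(k)$ with high probability, and the accuracy hypothesis \eqref{eqn:pop-acc} together with the $\Omega(1)$ gap between the two components forces $\ex{}{|M(X) \cap H|} = \Omega(k)$. Next, define the fingerprinting statistic
\[
Z \;=\; \sum_{i=1}^n \sum_{j=1}^{d} M(X)^j \cdot \phi(p^j) \cdot (X_i^j - p^j),
\]
where $\phi \from (0,1) \to \R$ is a score function borrowed from the Beta fingerprinting toolbox, calibrated so that (i) each correctly-selected heavy coordinate contributes $\Omega(\log d)$ in expectation --- powered by the $\log(d/k) = \Theta(\log d)$ log-density ratio between the two mixture components --- while (ii) the per-coordinate random variable $\phi(p^j)(X_i^j - p^j)$ has $O(1)$ second moment under the prior.

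Combining the accuracy conclusion above with the construction of $\phi$ yields $\ex{}{Z} = \Omega(k \log d)$. For the matching upper bound, I would apply a row-by-row tracing argument: for each $i$, compare $Z_i = \sum_j M(X)^j \phi(p^j)(X_i^j - p^j)$ on the true dataset against the same quantity on a dataset where $X_i$ is re-sampled from $\cP$. Differential privacy bounds how much $M(X)$ can shift, and crucially $M(X)$ has exactly $k$ nonzero entries, so a Cauchy--Schwarz step over the $k$ active indices --- together with the $O(1)$ second-moment control on each term --- gives a per-row bound $|\ex{}{Z_i}| = O(\eps \sqrt{k})$. Summing over $i$ produces $\ex{}{Z} = O(n \sqrt k)$, and equating the two bounds gives $n = \Omega(\sqrt k \log d)$ as desired.

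The main obstacle will be the design of the score function $\phi$: it must simultaneously exploit the rarity of heavy coordinates (so that the accuracy side picks up the full $\log d$ factor) while remaining \emph{second-moment bounded} coordinate-wise (so that the DP upper bound does not also absorb a $\log d$). This asymmetric heavy/light treatment, together with the sparsity constraint $\|M(X)\|_1 = k$ which restricts the per-row sum to $k$ nonzero terms, is precisely the ``extension of the fingerprinting method to sparse selection'' announced in the abstract, and I expect the rest of the argument (the concentration of $|H|$, the reduction from worst-case distributions to the prior, and the tracing step) to follow by now-standard manipulations.
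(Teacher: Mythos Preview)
Your high-level architecture --- a fingerprinting statistic $Z$, a prior that makes roughly $k$ coordinates ``heavy'', lower-bound $\ex{}{Z}$ via accuracy and upper-bound it via privacy --- matches the paper. But the specific mechanism you propose for extracting the $\log d$ factor does not work, and the obstacle you flag at the end is in fact fatal to the approach as written.

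The problem is that you place the score function $\phi$ \emph{inside} the statistic $Z = \sum_{i,j} M(X)^j \phi(p^j)(X_i^j - p^j)$. For your lower bound to reach $\Omega(k\log d)$, you need $\phi(p^j) = \Theta(\log d)$ on the heavy coordinates. But the privacy upper bound passes through $\ex{}{|Z_i'|} \le \sqrt{\sum_{j:\,M(X_{\sim i})^j = 1} \phi(P^j)^2 P^j(1-P^j)}$, and since an accurate $M$ selects precisely the heavy coordinates, this is $\Theta(\sqrt{k}\,\log d)$, not $\Theta(\sqrt{k})$. Your ``$O(1)$ second moment under the prior'' does not rescue this: the relevant expectation is conditioned on $M(X_{\sim i})^j = 1$, which is strongly correlated with $j$ being heavy, so you cannot appeal to the unconditional prior. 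The two bounds then collapse to $n = \Omega(\sqrt{k})$. Relatedly, the ``log-density ratio between mixture components'' is not what the fingerprinting integration-by-parts produces; what it produces is the derivative of the log-density of the prior, and an explicit heavy/light mixture does not generate a $\log d$ factor that way.

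The paper's resolution is to remove $\phi$ from the statistic entirely and instead bake the $\log d$ into the shape of a \emph{single} prior: take $P^j \sim \BetaD(\beta,\beta)$ independently with $\beta = \Theta(\log(d/k))$. The statistic is simply $Z = \sum_{i,j} M(X)^j(X_i^j - P^j)$, so the privacy upper bound is $O(n\sqrt{k})$ with no extra factor. The $\log d$ appears only on the accuracy side, via the identity
\[
\ex{P \sim \BetaD(\beta,\beta),\,X}{f(X)\textstyle\sum_i (X_i - P)} \;=\; 2\beta \cdot \ex{P}{g(P)\bigl(P - \tfrac12\bigr)},
\]
obtained by integrating by parts against the Beta density. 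A separate anti-concentration calculation shows that with this $\beta$ the top $k$ of $P^1,\ldots,P^d$ exceed $3/4$ in expectation, so accuracy forces $\ex{}{\sum_j M(X)^j(P^j - \tfrac12)} = \Omega(k)$, and the identity multiplies this by $2\beta$. Equating $\Omega(\beta k)$ with $O(n\sqrt{k})$ gives $n = \Omega(\beta\sqrt{k}) = \Omega(\sqrt{k}\log d)$. The asymmetry you were looking for --- $\log d$ on the accuracy side but not on the privacy side --- is achieved by the prior's shape parameter, not by a weighting inside $Z$.
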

Observe that our lower bound applies whenever the error is at most $k/10$, which is just slightly smaller than the trivial error of $k$ that can be obtained by selecting the first $k$ coordinates.

\vspace{-2mm}
\paragraph{Scaling with the Privacy and Accuracy Parameters.}  For simplicity, we suppress the dependence on the privacy and accuracy parameters in Theorem \ref{thm:intro1}. We assume constant error $\frac{1}{10}$ per selected coordinate, and our lower bound applies to algorithms satisfying $(1,1/nd)$-differential privacy. Generic reductions can be used to give the appropriate dependence on these parameters in many cases (see e.g.~\cite{BunUV14,SteinkeU15b}).  

\subsubsection*{Empirical Error vs.~Population Error}
In Theorem \ref{thm:intro1}, accuracy was defined with respect to the population mean $p = \ex{}{\cP}$.  This statistical framework is motivated by the fact that we are interested in finding underlying patterns in the population, rather than random empirical deviations.

We could equally well define accuracy with respect to the empirical mean $\overline{X} = \frac1n \sum_i X_i$.  Since $\mathbb{E}[\| \overline{X} - p \|_{\infty}] \leq \sqrt{\frac{\log(2d)}{ 2n}}$ and we are interested in settings where $n \gtrsim \log d$, these settings are equivalent.\footnote{We need $n \gtrsim \log d$ even in the non-private statistical setting to have meaningful statistical accuracy. In the absence of privacy constraints, the empirical accuracy guarantee can be satisfied for every $n$.}  In particular, we can replace the accuracy condition \eqref{eqn:pop-acc} in Theorem \ref{thm:intro1} with 
\begin{equation}
\ex{M}{\sum_{j \in [d] ~:~ M(X)^j = 1} \overline{X}^j} \geq \max_{ t \in \zo^d \atop \|t\|_1 = k} \sum_{j \in [d] ~:~ t^j = 1} \overline{X}^j - \frac{k}{20}.
\end{equation}

This empirical variant of the problem was first studied in a very recent work by Bafna and Ullman~\cite{BafnaU17}.  They proved an optimal lower bound for the empirical variant of the problem in the regime where the error is very small. Specifically they show that, if the empirical error is $\ll k \sqrt{\log(d)/n}$ (i.e.~a constant factor smaller than the sampling error), then a dataset of size $n = \Omega(k \log d)$ is necessary.\footnote{The results of \cite{BafnaU17} actually use a slightly stronger accuracy requirement, which requires that for every $j \in [d]$, if $M(X)^j = 1$ then $\overline{X}^{(k)} - \overline{X}^j \ll \sqrt{\log(d)/n}$ where $\overline{X}^{(k)}$ is the $k$-th largest entry of $\overline{X}$.  This technical distinction is not crucial for this high-level discussion.}\tnote{I think the BU17 accuracy def is stronger, it's $\ell_\infty$ versus $\ell_1$.} \jnote{Yes, but it's technically incomparable because it only requires everything in $S$ to be larger than $P^{(k)}$.  Ours is proportional to $\sum_{j = 1}^{k} P^{(j)}$} However, their results do not give any lower bound for larger error, or for the statistical problem of approximating the largest entries of the population mean $p$.  Indeed, their lower bounds hold even for uniformly random datasets $X$.  For these datasets we can easily achieve empirical error $k\sqrt{2 \log(d)/n}$ and, since $p = (\frac12,\dots,\frac12)$ is fixed, we can achieve population error $0$.  Thus, our lower bounds for large error are qualitatively different from the lower bounds of \cite{BafnaU17} for small error.

\subsubsection*{Application to Multiple Hypothesis Testing}
We can prove an analogous lower bound for a related problem where we do not have a fixed number of coordinates $k$ that we want to select, but instead we want to distinguish coordinates of $p$ that are larger than some threshold $\tau$ from those that are smaller than some strictly lower threshold $\tau' < \tau$.  This problem is a special case of multiple hypothesis testing in statistics.  Without privacy it can be solved using just $n = O_{\tau, \tau'}(\log d)$ samples.

As before, we can use the exponential mechanism or the sparse vector technique to obtain a private algorithm for this problem.  The algorithm uses $n = O_{\tau, \tau'}(\sqrt{k} \log d)$ samples, where $k$ is an upper bound on the number of coordinates of $p$ that are above the threshold $\tau'$.\footnote{We assume that the upper bound $k$ is specified as part of the problem input. If $k$ is not specified, the problem and the accuracy guarantee can be formulated differently, but this is not relevant for the current high-level discussion.}

Our second result shows that this sample complexity is essentially optimal for the multiple hypothesis testing problem, even if $\cP$ is a product distribution.
\begin{thm}[Informal version of Corollary~\ref{cor:mhtlb}] \label{thm:intro2}
Fix $n,d,k \in \N$ with $k \ll d$.  There exist absolute constants $\tau, \tau', \rho \in (0,1)$, $\tau' < \tau$ such that the following holds.  Let $M$ be a differentially private algorithm that takes a dataset $X \in (\zo^d)^n$ of $n$ samples, and outputs an indicator vector $M(X) \in \zo^d$.  Suppose that for every product distribution $\cP$ over $\zo^d$ such that $| \{ j : p^j \geq \tau \}| \leq k$,
\begin{enumerate}
\item $p^j \leq \tau' \Longrightarrow \pr{X\leftarrow\cP^n, M}{M(X)^j = 1} \leq \rho k /d$,
\item $p^j \geq \tau \Longrightarrow \pr{X\leftarrow\cP^n, M}{M(X)^j = 1} \geq 1-\rho,$
\end{enumerate}
where $p = \ex{}{\cP}$.  Then $n = \Omega(\sqrt{k} \log d)$.
\end{thm}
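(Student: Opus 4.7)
The cleanest approach is a black-box reduction from the top-$k$ selection lower bound of Theorem~\ref{thm:intro1} (Corollary~\ref{cor:topklb}): given any algorithm $M$ satisfying conditions~1 and~2, I would post-process it into a top-$k$ selection algorithm $M'$ with small error, so the same sample-complexity lower bound must apply.

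Define $M'(X)$ as $M(X)$ post-processed to have Hamming weight exactly $k$: if $\|M(X)\|_1 < k$, pad using uniformly random unselected coordinates; if $\|M(X)\|_1 > k$, drop uniformly random selected coordinates.  Post-processing preserves differential privacy.  To verify top-$k$ accuracy, I would restrict attention to the family of product distributions in which exactly $k$ coordinates have mean $1$ (``heavy'') and the remaining $d-k$ have mean $0$ (``light''); each such $\cP$ trivially satisfies the hypothesis $|\{j : p^j \geq \tau\}| \leq k$.  Writing $H(X)$ and $L(X)$ for the number of heavy and light coordinates selected by $M(X)$, conditions~1 and~2 give
\[
\ex{}{H(X)} \;\geq\; (1-\rho)k \eqand \ex{}{L(X)} \;\leq\; (d-k)\cdot \tfrac{\rho k}{d} \;\leq\; \rho k.
\]
A short case analysis on whether $H+L$ is bigger or smaller than $k$ shows that random truncation/padding preserves at least $H - L$ heavies in expectation: when $H+L>k$ the heavies remaining equal $Hk/(H+L) \geq H - L$ (using $H \leq k$), and when $H+L\leq k$ no heavies are dropped.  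Hence the expected objective value of $M'$ is at least $\ex{}{H(X)} - \ex{}{L(X)} \geq (1-2\rho)k$, against an optimum of $k$, i.e.\ error at most $2\rho k$.  Choosing the absolute constant $\rho \leq 1/20$ makes this error below $k/10$, so Corollary~\ref{cor:topklb} yields $n = \Omega(\sqrt{k}\log d)$.

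\textbf{Main obstacle.}  The key technical point is controlling the heavies lost during the random truncation/padding, which is precisely where the detailed form of condition~1 matters: the bound $\rho k /d$ on false positives (rather than a uniform constant) ensures that summing over the $d-k$ light coordinates contributes only $\sim \rho k$ expected false positives, which matches the $\sim k$ true positives up to an $O(\rho)$-fraction.  Without this $k/d$ scaling the expected false positives would be $\Theta(\rho d) \gg k$, and random truncation would preferentially remove true positives, leaving only $\Theta(k^2/d)$ heavies in $M'(X)$ so the top-$k$ error would be $\Omega(k)$ no matter how small $\rho$ is.  Thus the precise per-coordinate form of the false-positive guarantee in condition~1 is exactly what makes this reduction go through.
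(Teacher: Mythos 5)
Your reduction has a genuine gap at the point where you invoke Corollary~\ref{cor:topklb}. That corollary (and Theorem~\ref{thm:intro1}) is not a lower bound against algorithms that are accurate merely on some sub-family of product distributions of your choosing: its hard instances are the distributions whose means $P^1,\dots,P^d$ are drawn i.i.d.\ from $\BetaD(\beta,\beta)$ with $\beta=\Theta(\log(d/k))$. To conclude $n=\Omega(\sqrt{k}\log d)$ for $M'$ you must verify the top-$k$ accuracy of $M'$ on those instances, but you verify it only on the family where $p\in\zo^d$ has exactly $k$ ones. That family is genuinely easy under $(\eps,\delta)$-differential privacy: on such inputs every column sum is $0$ or $n$, so a stability-based (``distance to instability'' / propose-test-release) selector outputs the exact support of $p$ with $n=O(\log(1/\delta)/\eps)$ samples. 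Hence no argument that certifies accuracy of $M'$ only on this family can possibly yield an $\Omega(\sqrt{k}\log d)$ bound.

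The natural repair---running your truncation analysis on the beta-prior instances---fails for a concrete reason: under that prior there are typically far more than $k$ coordinates in the gray zone $p^j\in(\tau',\tau)$, and hypotheses~1 and~2 place no constraint whatsoever on how often $M$ selects them. An adversarial $M$ could select all of them, and your uniformly random truncation down to weight $k$ would then retain only a vanishing fraction of the true positives, so $M'$ would have top-$k$ error $\Omega(k)$ no matter how small $\rho$ is. This is exactly why the paper does not reduce to the top-$k$ statement: Corollary~\ref{cor:mhtlb} applies Theorem~\ref{thm:general} directly, first showing $\ex{}{\|M(X)\|_2^2}\geq k$ from hypothesis~2, and then splitting $\ex{}{\sum_j M(X)^j(P^j-\frac12)}$ according to whether $P^j\leq\tau'$ (where hypothesis~1 makes the contribution negligible, of order $k/d$ per coordinate) or $P^j>\tau'$ (where every selection contributes at least $\tau'-\frac12$). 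Extra gray-zone selections only increase both $\ex{}{\|M(X)\|_2^2}$ and the accuracy sum, so no truncation is ever needed. A reduction in your style would require a top-$k$ lower bound for outputs of unconstrained weight, which is precisely what the normalization by $\ex{}{\|M(X)\|_2^2}$ in Theorem~\ref{thm:general} provides.
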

As before, we remark that the fact that $\tau' = \tau - \Omega(1)$ makes our lower bound stronger.  Also, note that we allow the probability of a false positive ($p^j \leq \tau'$ but $M(X)^j = 1$) to be as large as $\rho k / d$, which means that in expectation there can be as many as $\Omega(k)$ of these false positives.  In contrast there are only $k$ true positives ($p^j \geq \tau$ and $M(X)^j = 1$), so our lower bound applies even to algorithms for which the number of false positives is a constant fraction of the number of true positives.  The accuracy condition in Theorem \ref{thm:intro2} is closely related to the \emph{false discovery rate}, which is a widely used statistical criterion introduced in the influential work of Benjamini and Hochberg~\cite{BenjaminiH79}.  A recent work by Dwork, Su, and Zhang~\cite{DworkSZ15} introduced the problem of privately controlling the false discovery rate.

\subsection{Techniques} \label{sec:intro:techniques}

Our techniques build on the recent line of work proving lower bounds in differential privacy and related problems via either fingerprinting codes or techniques inspired by fingerprinting codes~\cite{BunUV14, HardtU14, SteinkeU15a, SteinkeU15b, DworkSSUV15, BunSU17}. Our results follow from the following very general lower bound that refines and generalizes several of the results from those works.
\begin{thm} [Main Lower Bound] \label{thm:general}
Let $\beta,\gamma,k>0$ and $n,d \in \mathbb{N}$ be a fixed set of parameters.  Let $P^1, \cdots, P^d$ be independent draws from $\BetaD(\beta,\beta)$ and let $X \in (\{0,1\}^d)^n$ be a random dataset such that every $X_i^j$ is independent (conditioned on $P$) and $\E[X_i^j]=P^j$ for every $i \in [n]$ and $j \in [d]$.

Let $M : (\{0,1\}^d)^n \to [-1,1]^d$ be a $(1,\beta\gamma k / n d)$-differentially private algorithm and assume that $M$ satisfies the condition $\ex{P,X,M}{\|M(X)\|_2^2} = k$ and the accuracy condition
\begin{equation*}
\ex{P,X,M}{\sum_{j \in [d]} M(X)^j \cdot \left(P^j - \frac12\right)} \geq \gamma k.
\end{equation*}
Then $n \geq \gamma \beta \sqrt{k}.$
\end{thm}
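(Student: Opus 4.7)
The strategy is the fingerprinting method, adapted to exploit the sparsity constraint $\E[\|M(X)\|_2^2] = k$. Define
\[
T := \sum_{i=1}^n \E\!\left[(X_i - P) \cdot M(X)\right].
\]
I will show $T \ge 2\beta\gamma k$ from the accuracy hypothesis via a Beta fingerprinting identity, and $|T| \le O(n\epsilon\sqrt{k} + n\delta d)$ from differential privacy combined with an $L^2$ variance calculation. Substituting $\epsilon = 1$ and $\delta = \beta\gamma k/(nd)$ then yields $n \gtrsim \beta\gamma\sqrt{k}$.

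\emph{Lower bound on $T$.} By Beta--Bernoulli conjugacy, $P^j \mid X \sim \BetaD(\beta + n\bar X^j,\, \beta + n - n\bar X^j)$, so $\E[\bar X^j - P^j \mid X] = \frac{2\beta}{n + 2\beta}\bigl(\bar X^j - \tfrac12\bigr)$. Multiplying by the $X$-measurable quantity $M(X)^j$, taking expectations, and invoking the decomposition $\E[(\bar X^j - \tfrac12) M(X)^j] = \E[(\bar X^j - P^j) M(X)^j] + \E[(P^j - \tfrac12) M(X)^j]$ yields the fingerprinting identity
\[
\sum_{i=1}^n \E\!\left[(X_i^j - P^j)M(X)^j\right] \;=\; 2\beta \cdot \E\!\left[(P^j - \tfrac12)\,M(X)^j\right].
\]
Summing over $j \in [d]$ and applying the accuracy hypothesis gives $T \ge 2\beta\gamma k$.

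\emph{Upper bound on $T$.} Fix $i$ and let $X_i'$ be an independent copy of $X_i$ given $P$, let $X^{(i)}$ denote $X$ with row $i$ replaced by $X_i'$, and set $Y = M(X)$, $Y^{(i)} = M(X^{(i)})$. Since $Y^{(i)}$ is independent of $X_i$ given $P$, $\E[(X_i - P)\cdot Y^{(i)}] = 0$. Defining $\phi(y) := (X_i - P)\cdot y$ and splitting $\phi = \phi_+ - \phi_-$ into its nonnegative pieces (each bounded by $\|X_i - P\|_1$ on the range $[-1,1]^d$ of $M$), the multiplicative form of $(\epsilon,\delta)$-DP applied to $\phi_+$ and $\phi_-$ with adjacent inputs $X, X^{(i)}$ yields, after taking expectations and using $\E[\phi_+(Y^{(i)})] = \E[\phi_-(Y^{(i)})]$,
\[
|\E[\phi(Y)]| \;\le\; \tfrac{1}{2}(e^\epsilon - e^{-\epsilon})\,\E\!\left[\bigl|(X_i-P)\cdot Y^{(i)}\bigr|\right] \;+\; (1 + e^{-\epsilon})\, \delta\, \E[\|X_i - P\|_1].
\]
The essential ingredient is bounding the first factor by $O(\sqrt{k})$ rather than the naive $O(\sqrt{dk})$. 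Because $X_i$ and $Y^{(i)}$ are conditionally independent given $P$, and the $X_i^j$ are conditionally independent Bernoullis given $P$, all cross terms in the second moment vanish:
\[
\E\!\left[\bigl((X_i - P) \cdot Y^{(i)}\bigr)^2\right] \;=\; \E\!\left[\sum_j P^j(1 - P^j)(Y^{(i),j})^2\right] \;\le\; \tfrac{1}{4}\E[\|Y^{(i)}\|_2^2] \;=\; k/4.
\]
Hence $\E[|(X_i - P)\cdot Y^{(i)}|] \le \sqrt{k}/2$ by Jensen, and $\E[\|X_i - P\|_1] \le d/2$. Summing over $i$ gives $|T| \le O(n\epsilon\sqrt{k}) + O(n\delta d)$.

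\emph{Combining.} With $\epsilon = 1$ and $\delta = \beta\gamma k/(nd)$ the second term is $O(\beta\gamma k)$, so $2\beta\gamma k \le T \le O(n\sqrt{k}) + O(\beta\gamma k)$, which rearranges to $n \gtrsim \beta\gamma\sqrt{k}$. The main obstacle is the upper bound on $T$: a naive per-coordinate DP argument summed over $j$ via Cauchy--Schwarz loses an additional $\sqrt{d}$ factor and produces only $n \gtrsim \beta\gamma\sqrt{k/d}$. Treating $(X_i - P) \cdot Y^{(i)}$ as a single scalar whose second moment is controlled by $\E[\|Y^{(i)}\|_2^2]$ (through the vanishing cross covariances) is what recovers the sharp $\sqrt{k}$ dependence and represents the extension of the fingerprinting method to sparse selection problems.
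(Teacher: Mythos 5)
Your proof is correct and follows the same overall architecture as the paper (conflicting privacy/accuracy bounds on $\E[Z]$ with $Z=\sum_i \langle M(X), X_i-P\rangle$), but the heart of the accuracy half is derived by a genuinely different route. The paper proves its fingerprinting lemma (Lemma~\ref{lem:FPCb}) by first establishing the derivative identity $\ex{X_{1\cdots n}\leftarrow p}{f(X)\sum_i(X_i-p)}=p(1-p)g'(p)$ and then integrating by parts against the Beta density; you instead use Beta--Bernoulli conjugacy to compute $\E[\overline{X}^j-P^j\mid X]$ and close the loop with a self-consistent algebraic rearrangement ($u=\tfrac{2\beta}{n+2\beta}(u+v)\Rightarrow nu=2\beta v$). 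This makes rigorous exactly the ``conjugate prior'' intuition that the paper sketches in Section~\ref{sec:intro:techniques} and then abandons (``our actual proof \ldots deviates significantly from this intuition''); your version is more elementary (no calculus) and arguably more illuminating, though the paper's integration-by-parts lemma is stated for arbitrary $f$ and asymmetric $\BetaD(\alpha,\beta)$, a generality your derivation would also support but does not state. On the privacy side your argument is essentially Lemma~\ref{lem:upper} with a two-sided twist: splitting $\phi=\phi_+-\phi_-$ and using $\E[\phi(Y^{(i)})]=0$ to extract the factor $\tfrac12(e^\eps-e^{-\eps})$ rather than the paper's one-sided $e^\eps\E[\max\{0,\cdot\}]$; both rely on the same key second-moment computation $\E[(\langle X_i-P, Y^{(i)}\rangle)^2]\le \tfrac14\E[\|M(X)\|_2^2]$, which is the step that saves the $\sqrt{d}$ factor. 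The only blemish is that you conclude with ``$n\gtrsim\gamma\beta\sqrt{k}$'' while the theorem asserts the explicit constant $1$; tracking your own constants ($\sinh(1)\cdot\tfrac{\sqrt{k}}{2}$ per row and $(1+e^{-1})\delta d/2$ for the $\delta$ term) yields $n\ge 2.2\,\gamma\beta\sqrt{k}$, so the claim does follow, but this should be made explicit.
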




In our applications, $\gamma=\Omega(1)$ is a small constant, whereas $\beta=\omega(1)$ is large, namely $\beta=\Theta(\log(d/k))$ for the top-$k$ lower bound. For the purposes of this introduction, it suffices to know that $\BetaD(\alpha, \beta)$ is a family of probability distributions over $[0,1]$ with mean $\frac{\alpha}{\alpha + \beta}$.  For simplicity, we restrict our attention to the symmetric case where $\alpha = \beta$. The distribution $\BetaD(1,1)$ is the uniform distribution on $[0,1]$ and the distribution becomes more concentrated around $1/2$ as $\beta \to \infty$, specifically the variance of $\BetaD(\beta, \beta)$ is $\Theta(\frac{1}{\beta})$.  The necessary technical details about the beta distribution are in Section~\ref{sec:prelims:beta}. 

Observe that in our lower bound the population mean $P$ is itself random.  If the population mean were fixed, then we could obtain a private algorithm with perfect accuracy by ignoring the sample and outputting a fixed function of $P$.  Thus to obtain lower bounds then we must assume that the distribution $\cP$ is chosen randomly and that $M$ is accurate for these distributions $\cP$.

We now describe informally how Theorem \ref{thm:general} implies Theorem \ref{thm:intro1}.  First, observe that any algorithm for approximate top-$k$ selection by definition satisfies $\ex{}{\|M(X)\|_2^2} = k$, since it outputs an indicator vector with exactly $k$ non-zero coordinates.  By Theorem \ref{thm:general}, to prove an $n = \Omega(\sqrt{k} \log d)$ lower bound, it suffices to show that for some $\beta = \Omega(\log d)$, if $M$ solves the approximate top-$k$ problem, then $\ex{}{\sum_j M(X)^j \cdot (P^j - \frac12)} = \Omega(k)$.  By the accuracy assumption \eqref{eqn:pop-acc}, it suffices to show
\begin{equation}
\ex{P}{\max_{ t \in \zo^d \atop \|t\|_1 = k} \sum_{j : t^j = 1} \left(P^j - \frac12\right) - \frac{k}{10} } \geq \Omega( k ) .
\label{eqn:anticoncentration}
\end{equation} 
This is simply a property of the beta distribution and our choice of $\beta$.  We give a simple anti-concentration result for beta distributions showing that the required bound \eqref{eqn:anticoncentration} holds for some choice of $\beta = \Omega(\log d)$.

We remark that previous fingerprinting-based lower bounds in differential privacy \cite{BunUV14,BassilyST14,DworkTTZ14,SteinkeU15b,DworkSSUV15,TalwarTZ15} essentially correspond to setting $\beta=O(1)$ in Theorem \ref{thm:general}. 
Thus the key novelty of our result is that we obtain stronger lower bounds by setting $\beta=\omega(1)$.
\subsubsection*{Overview of the Analysis}

We will sketch the argument for our lower bound in the case of approximate top-$k$ selection.  Inspired by prior lower bounds~\cite{BunUV14, SteinkeU15b, DworkSSUV15, BafnaU17} we consider the quantity 
$$
Z := \sum_{i \in [n]} \langle M(X), (X_i - P) \rangle = n \cdot \left(\sum_{j : M(X)^j = 1} \overline{X}^j - P^j\right).
$$
We then use the privacy and accuracy assumptions to establish conflicting upper and lower bounds on the quantity $\ex{}{Z}$.  Combining the two bounds yields the result.  

Firstly, we use the differential privacy of $M$ to get an upper bound on $\ex{}{Z}$.  Specifically, for any $i \in [n]$, $M(X)$ should have approximately the same distribution as $M(X_{\sim i})$, where $X_{\sim i}$ is the dataset we obtain by replacing $X_i$ with an independent sample from $\cP$.  However, $X_i$ and $M(X_{\sim i})$ are independent (conditioned on $P$) and, therefore, $\ex{}{\langle M(X_{\sim i}), X_i - P \rangle}=0$. By differential privacy, $\ex{}{\langle M(X), X_i - P \rangle} \approx \ex{}{\langle M(X_{\sim i}), X_i - P \rangle} = 0$. More precisely, we obtain $\ex{}{\langle M(X), X_i - P \rangle} \leq O(\sqrt{k})$ and, thus, $\mathbb{E}[Z] \leq O(n \sqrt{k})$ (Lemma \ref{lem:upper}).

Secondly, if $M(X)$ solves the approximate top-$k$ selection problem, then $\mathbb{E}[Z]$ must be large (Lemma \ref{lem:lower}).  This is the technical heart of our result and requires extending the analysis of fingerprinting codes. We give some imprecise intuition for why we should expect $\ex{}{Z} \geq \Omega(k \beta)$.

The beta distribution has the following ``conjugate prior'' property. Suppose we sample $P \leftarrow \BetaD(\beta, \beta)$, independently sample $Y_1,\dots,Y_n \in \zo$ with mean $P$, and let $\overline{Y} = \frac{1}{n} \sum_{i} Y_i$. Then the conditional distribution of $P$ given $\overline{Y}$ is
$$(P \mid \overline{Y}=\overline{y}) \sim \BetaD\left( \beta + n\overline{y}, \beta + n(1-\overline{y})\right),~~~\textrm{so that}~~~\ex{}{P \mid \overline{Y}=\overline{y}} = \frac{\beta + n\overline{y}}{2\beta + n}.$$
Thus, if $\overline{y} \geq \frac12 + \Omega(1)$, then $$\ex{}{\overline{Y} - P \mid \overline{Y}=\overline{y}} = \frac{(2\overline{y}-1)\beta}{2\beta+n} = \Omega(\beta/n).$$  
We connect this back to $Z$ by observing that, if $M$ accurately solves the approximate top-$k$ selection problem, then it will identify a set of $k$ coordinates such that $\overline{X}^j = \frac12 + \Omega(1)$ on average over the selected indices $j$.  Applying this analysis and summing over the $k$ selected coordinates yields $$\frac{1}{n} \ex{}{Z} = \ex{}{\sum_{j \in [d] ~:~ M(X)^j = 1} \overline{X}^j - P^j} \approx k \cdot \ex{}{\overline{X}^j - P^j~\bigg|~\overline{X}^j \geq \frac12 + \Omega(1)} \geq \Omega(\beta k /n),$$ as desired.  
Unfortunately, our actual proof is somewhat more technical and deviates significantly from this intuition, but also gives a more versatile result.

Finally, combining the bounds $\Omega(k \beta) \leq \ex{}{Z} \leq O(n\sqrt{k})$ yields $n \geq \Omega(\sqrt{k} \beta)$ (Theorem \ref{thm:general}).

\subsection{Relationship to Previous Lower Bounds and Attacks}

Our argument is closely related to the work on \emph{tracing attacks}~\cite{H+08, SOJH09, BunUV14, SteinkeU15b, DworkSSUV15, BunSU17, DworkSSU17}.  In a tracing attack, the adversary is given (i) the output $M(X)$ (where $X \leftarrow \cP^n$ consists of $n$ independent samples of individuals' data), (ii) an approximate population mean $p \approx \ex{}{\cP}$, and (iii) the data $Y$ of a ``target'' individual. \tnote{Unmatched parentheses 1) upset my ocd, 2) don't parse in my editor, and (3 were invented by satan himself. Also, using roman to distinguish from $\backslash$eqref.} The target individual is either a random member of the dataset $X$ or an independent random sample from the population $\cP$, and the attacker's goal is to determine which of these two is the case.  Although we don't state our attack in this model, our attack has essentially this format.  Specifically, we consider the quantity $Z=Z_{Y,M(X),p} = \langle M(X), Y - p \rangle$. If $Y \leftarrow \cP$ is a fresh sample from the population, $Z$ is zero in expectation and small with high probability. Whereas, when $Y = X_i$ for a random $i \in [n]$, $Z$ is large in expectation, thus we have some ability to distinguish between these two cases. 

Another line of work proves lower bounds in differential privacy via \emph{reconstruction attacks}~\cite{DinurN03, DworkMT07, DworkY08, KasiviswanathanRSU10, KasiviswanathanRS13, MuthukrishnanN12, NikolovTZ13}.  At a high-level, in a reconstruction attack, each sample $X_i$ contains some public information and an independent, random sensitive bit.  The attacker is given $M(X)$ and the public information, and must determine the sensitive bit for $99\%$ of the samples.  These attacks do not give any asymptotic separations between the sample complexities of private and non-private problems, because it is easy to prevent reconstruction attacks without providing meaningful privacy by simply throwing out half of the samples and then running a non-private algorithm on the remaining samples. This subsampling prevents reconstruction and only increases the sample complexity by a factor of two compared to the non-private setting.

\jnote{I commented out the paragraph about the adaptive setting for the submission.}\tnote{I worry that a reviewer will say that our lower bound is implied by BUV.}
The work of Bun, Ullman, and Vadhan \cite{BunUV14} combines tracing attacks with reconstruction attacks to prove tight lower bounds for large, structured sets of queries (e.g.~all $k$-wise conjunctions). In particular, their work demonstrates that the private multiplicative weights algorithm \cite{HardtR10} is nearly optimal. Since selection is a subroutine of private multiplicative weights, this implies a lower bound for private selection. However, this implicit lower bound for private selection only holds in a complex \emph{adaptive} setting \cite{BunSU17}, where the algorithm must select $k$ items one at a time and the values of the available items change after each selection is made. In contrast, our lower bound is stronger, as it holds for a simple set of items with fixed values.\tnote{Want to mention this, but make it clear our results are much stronger.}

For the special case of pure differential privacy (i.e.~$(\varepsilon,\delta)$-differential privacy with $\delta=0$) lower bounds can be proved using the ``packing'' technique \cite{FeldmanFKN09,BeimelKN10,HardtT10}.
The sample complexity of the top-$k$ selection problem becomes $n=\Theta(k \log d)$ under pure differential privacy. (The upper bound is still attained by repeated use of the exponential mechanism, but the stricter privacy requirement changes the analysis and increases the sample complexity.)  Packing arguments do not provide any non-trivial lower bounds for general differentially private algorithms (i.e.~$(\varepsilon, \delta)$-differential privacy with $\delta > 0$).

\jnote{Not to deny you citations, but I wanted to comment out the CDP lower bound.  It's not a published result and there is no upside to explaining why such a close variant of our problem can be solved by known techniques.}

\section{Preliminaries} \label{sec:prelims}

\subsection{Notational Conventions}
We will use the following notational conventions extensively throughout our analysis.  We use $[n] = \{1, 2, \cdots, n\}$ to denote the first $n$ natural numbers.  We use $X \leftarrow \mathcal{D}$ to denote that $X$ is sampled from the probability distribution $\mathcal{D}$.   We also use the shorthand $X_{1 \cdots n} \leftarrow \mathcal{D}$ to denote that $X_1, \cdots, X_n$ are drawn independently from $\mathcal{D}$.  Given a probability $p \in [0,1]$, we use the shorthand $X \leftarrow p$ to denote that $X \leftarrow \mathsf{Bernoulli}(p)$ is a sample from a Bernoulli distribution.   Likewise, $X_{1 \cdots n} \leftarrow p$ denotes that $X_1,\dots,X_n$ are independent samples from $\mathsf{Bernoulli}(p)$. We follow the convention that upper case (non-caligraphic) letters represent random variables and lower case letters represent their realizations. We will treat $X \in (\{0,1\}^d)^n$ and $X \in \{0,1\}^{n \times d}$ equivalently.  For $i \in [n], j \in [d]$, we will subscript $X_i$ to denote the $i^\text{th}$ row, superscript $X^j$ to denote the $j^\text{th}$ column, and $X_i^j$ to denote the entry in the $i^\text{th}$ row and $j^\text{th}$ column, for $i \in [n]$ and $j \in [d]$. We use $\log$ to denote the natural logarithm, i.e.~$\log(z) := \log_e(z)$.

\subsection{Differential Privacy} \label{sec:prelims:dp}
A dataset $x = (x_1,\dots,x_n) \in (\zo^{d})^n$ is an $n \times d$ matrix.  We say that two datasets $x, x'$ are \emph{neighbors} if they differ on at most one row.
\begin{defn}[Differential Privacy \cite{DworkMNS06}]
Fix $n,d \in \N$, $\eps, \delta > 0$.  A (randomized) algorithm $M \from (\zo^d)^n \to \mathcal{R}$ is $(\eps, \delta)$-differentially private if, for every pair of neighboring datasets $x,x'$, and every $R \subseteq \mathcal{R}$,
$$
\pr{}{M(x) \in R} \leq e^{\eps} \pr{}{M(x') \in R} + \delta.
$$
\end{defn}

This definition provides meaningful privacy roughly when $\eps \leq 1$ and $\delta \ll \frac{1}{n}$~\cite{KasiviswanathanS14}.  Since our lower bounds allow for $\eps = 1$ and $\delta$ almost as large as $\frac{1}{n}$, they apply to nearly the entire range of parameters for which differential privacy is meaningful.

\subsection{Beta Distributions} \label{sec:prelims:beta}
Our results make heavy use of the properties of beta distributions.  A \emph{beta distribution}, denoted $\BetaD(\alpha,\beta)$, is a continuous distribution on $[0,1]$ with two parameters $\alpha > 0$ and $\beta > 0$ and probability density at $p$ proportional to $p^{\alpha-1} (1-p)^{\beta-1}$.
More precisely, the cumulative distribution function is described by $$\forall \alpha > 0 ~ \forall \beta > 0 ~ \forall p_* \in [0,1] ~~~~~~~~~~ \pr{P \leftarrow \BetaD(\alpha,\beta)}{P \leq p_*} = \int_0^{p_*} \frac{p^{\alpha-1} (1-p)^{\beta-1}}{\BetaF(\alpha,\beta)} \mathrm{d}p,$$ where $\BetaF(\alpha,\beta) := \int_0^1 p^{\alpha-1} (1-p)^{\beta-1} \mathrm{d} p$ is the beta function.
For all $\alpha,\beta>0$, $$\ex{P \leftarrow \BetaD(\alpha,\beta)}{P} = \frac{\alpha}{\alpha+\beta}~~~\textrm{and}~~~\var{P \leftarrow \BetaD(\alpha,\beta)}{P} = \frac{\alpha\beta}{(\alpha+\beta)^2(\alpha+\beta+1)}.$$ Note that $\BetaD(1,1)$ is simply the uniform distribution on $[0,1]$, and $\BetaD(\beta,\beta)$ becomes more concentrated around its mean of $1/2$ as $\beta$ gets larger.

The key result we need is a form of anti-concentration for beta distributions, which says that if we draw $d$ independent samples from a certain beta distribution with mean $1/2$, then in expectation the $k$ largest samples are at least $3/4$.

\newcommand{\minex}{28} 
\begin{prop}\label{prop:betatopk}\tnote{This lemma can definitely have better constants}\tnote{If we want tight bounds for $\alpha$-accuracy, this lemma needs to be generalized to $\beta\approx\alpha \log(d/k)$ and $\ex{}{\max} \geq (1/2+\alpha)k$.} \jnote{Good thing to keep in mind.  Maybe we should do that for some future version?  I am kinda rushing to the FOCS deadline and don't want to make too many fundamental changes that require global edits.}
Fix $\beta > 0$ and $d,k \in \mathbb{N}$. Let $P^1, \cdots, P^d$  be independent samples from $\BetaD(\beta,\beta)$. If $k \geq 1$ and $1 \leq \beta \leq 1+\frac{1}{2} \log\left(\frac{d}{8 \max\{2k,\minex\}}\right)$, then $$\ex{P^{1 \cdots d}}{\max_{s \subset [d] ~:~ |s|=k} \sum_{j \in s} P^j} \geq \frac{3}{4} k.$$ 
\end{prop}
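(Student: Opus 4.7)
The plan is to fix a threshold $t$ slightly above $3/4$ and show that at least $k$ of the samples $P^j$ exceed $t$ with good probability. Setting $t = 7/8$ and $N := |\{j \in [d] : P^j \geq 7/8\}|$, we have $\max_{s \subset [d] : |s|=k} \sum_{j \in s} P^j \geq (7/8)\min(N,k)$, so it suffices to establish $\pr{}{N \geq k} \geq 6/7$, which yields $\ex{P^{1 \cdots d}}{\max_{s : |s|=k} \sum_{j \in s} P^j} \geq (7/8)(6/7)k = 3k/4$.

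Since the $P^j$ are iid, $N \sim \mathrm{Binomial}(d,q)$ with $q := \pr{P \leftarrow \BetaD(\beta,\beta)}{P \geq 7/8}$. I would first derive a clean lower bound on $q$ by the density estimate: bounding $p^{\beta-1} \geq (7/8)^{\beta-1}$ on $[7/8,1]$ gives
\begin{equation*}
q \;\geq\; \frac{(7/8)^{\beta-1}}{\BetaF(\beta,\beta)} \int_{7/8}^{1}(1-p)^{\beta-1}\,\mathrm{d}p \;=\; \frac{(7/8)^{\beta-1}(1/8)^\beta}{\beta\,\BetaF(\beta,\beta)},
\end{equation*}
and combining this with the trivial bound $\BetaF(\beta,\beta) \leq 4^{-(\beta-1)}$ (valid since $p(1-p) \leq 1/4$) yields $q \geq (7/16)^{\beta-1}/(8\beta)$.

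Writing $M := \max\{2k, 28\}$, the hypothesis is $\beta-1 \leq \tfrac{1}{2}\log(d/(8M))$. Since $\log(16/7) \leq 1$, this implies $(16/7)^{\beta-1} \leq \sqrt{d/(8M)}$, so $qd \geq \sqrt{8Md}/(8\beta)$. Exponentiating the hypothesis also gives $d \geq 8M\, e^{2(\beta-1)} \geq 8M\beta^2$, where the last step uses the elementary fact $e^{2(\beta-1)} \geq \beta^2$ for all $\beta \geq 1$ (equality at $\beta=1$, derivatives compare easily). Plugging this into the previous bound yields $qd \geq \sqrt{8M \cdot 8M\beta^2}/(8\beta) = M \geq \max\{2k,28\}$.

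Finally, since $\ex{}{N} = qd \geq 2k$, the multiplicative Chernoff lower-tail bound with $\delta = 1/2$ gives $\pr{}{N < k} \leq \pr{}{N < qd/2} \leq e^{-qd/8} \leq e^{-28/8} < 1/7$, completing the proof. I expect the main obstacle to be constant bookkeeping: the threshold $7/8$ must be chosen so that $q$ is simultaneously large enough for Chernoff concentration to apply ($qd \geq 2k$) and tight enough to deliver $e^{-qd/8} \leq 1/7$, and the constant $28$ in the hypothesis appears calibrated precisely so that both conditions hold across the entire range $1 \leq \beta \leq 1 + \tfrac{1}{2}\log(d/(8M))$.
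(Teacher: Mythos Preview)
Your argument is correct and follows essentially the same structure as the paper: fix the threshold $7/8$, lower-bound the single-coordinate tail probability $q=\pr{}{P\ge 7/8}$, show $qd\ge \max\{2k,28\}$, and then use concentration to get $\pr{}{N\ge k}\ge 6/7$, which gives $\ex{}{\max}\ge(7/8)(6/7)k=3k/4$. Your density estimate $q\ge (7/16)^{\beta-1}/(8\beta)$ is exactly the first inequality of the paper's Lemma~\ref{lem:BetaL} specialized to $p_*=1/8$; the paper simply packages that bound as a lemma and then crudely simplifies it to $\tfrac18 e^{-2(\beta-1)}$ before invoking the hypothesis on $\beta$, whereas you carry the sharper form through and close it with the clean observation $e^{2(\beta-1)}\ge\beta^2$. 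The only genuine difference is the concentration step: the paper uses Chebyshev (yielding $\pr{}{N<k}\le 4/\max\{2k,28\}\le 1/7$), while you use the multiplicative Chernoff lower tail (yielding $\pr{}{N<k}\le e^{-28/8}<1/7$). Both are valid and both land at the same $6/7$ threshold, so this is a cosmetic variation rather than a different route.
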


The above proposition follows from an anti-concentration lemma for the beta distribution.
\begin{lem}\label{lem:BetaL}
For all $\beta \geq 1$ and all $p_* \in [0,1/2]$, $$\pr{P \leftarrow \BetaD(\beta,\beta)}{P > 1 - p_*} =\pr{P \leftarrow \BetaD(\beta,\beta)}{P < p_*} \geq \left( 4 p_* (1-p_*) \right)^{\beta-1} \frac{p_*}{\beta} \geq p_* \cdot e^{(\log(4p_*(1-p_*))-1)(\beta-1)}.$$
\end{lem}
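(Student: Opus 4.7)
}
The first equality is immediate: the density $p^{\beta-1}(1-p)^{\beta-1}/\BetaF(\beta,\beta)$ of $\BetaD(\beta,\beta)$ is invariant under the reflection $p \mapsto 1-p$, so the CDF satisfies $\Pr[P < p_*] = \Pr[P > 1-p_*]$. Thus I only need to lower bound $\Pr[P < p_*] = \frac{1}{\BetaF(\beta,\beta)} \int_0^{p_*} p^{\beta-1}(1-p)^{\beta-1} \, \mathrm{d}p$.

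The plan for the main inequality is to bound the numerator from below and the normalizing constant $\BetaF(\beta,\beta)$ from above, both using elementary monotonicity facts. For the numerator, since $\beta \geq 1$ the function $p \mapsto (1-p)^{\beta-1}$ is nonincreasing, and for $p \in [0,p_*]$ it is at least $(1-p_*)^{\beta-1}$. Pulling this factor out and integrating $p^{\beta-1}$ gives
\[
\int_0^{p_*} p^{\beta-1}(1-p)^{\beta-1} \, \mathrm{d}p \;\geq\; (1-p_*)^{\beta-1}\int_0^{p_*} p^{\beta-1}\,\mathrm{d}p \;=\; (1-p_*)^{\beta-1} \cdot \frac{p_*^{\beta}}{\beta} \;=\; (p_*(1-p_*))^{\beta-1}\cdot\frac{p_*}{\beta}.
\]
For the denominator, I use the one-line bound $p(1-p) \leq 1/4$ on $[0,1]$ together with $\beta-1 \geq 0$, which yields $(p(1-p))^{\beta-1} \leq (1/4)^{\beta-1}$ pointwise, so $\BetaF(\beta,\beta) = \int_0^1 (p(1-p))^{\beta-1}\,\mathrm{d}p \leq 4^{1-\beta}$. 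Dividing the lower bound on the numerator by this upper bound on $\BetaF(\beta,\beta)$ gives exactly $(4p_*(1-p_*))^{\beta-1}\cdot p_*/\beta$, which is the first inequality of the lemma.

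For the final inequality, I rearrange: after cancelling the common factor $p_*$ and rewriting $(4p_*(1-p_*))^{\beta-1} = e^{\log(4p_*(1-p_*))(\beta-1)}$, the claim reduces to $1/\beta \geq e^{-(\beta-1)}$, i.e.\ $\beta \leq e^{\beta-1}$. This is a standard exponential inequality: at $\beta=1$ both sides equal $1$, and for $\beta > 1$ the tangent bound $e^{\beta-1} \geq 1 + (\beta-1) = \beta$ applies. No step involves any real obstacle; the only delicate point is to make sure the monotonicity $(1-p)^{\beta-1} \geq (1-p_*)^{\beta-1}$ is invoked correctly, which requires $\beta \geq 1$ and is exactly why the hypothesis $\beta \geq 1$ appears.
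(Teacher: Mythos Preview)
Your proof is correct. The equality and the first inequality are handled exactly as in the paper: bound the numerator below via $(1-p)^{\beta-1}\ge(1-p_*)^{\beta-1}$ on $[0,p_*]$ and bound the denominator above via $p(1-p)\le 1/4$.

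For the second inequality you take a slightly different and cleaner route. The paper defines $f_{p_*}(\beta)=(4p_*(1-p_*))^{\beta-1}p_*/\beta$, computes $f_{p_*}'(\beta)\ge f_{p_*}(\beta)\bigl(\log(4p_*(1-p_*))-1\bigr)$, and integrates this differential inequality from $\beta=1$. You instead cancel the common factor $p_*\,e^{\log(4p_*(1-p_*))(\beta-1)}$ directly and reduce the claim to $1/\beta\ge e^{-(\beta-1)}$, i.e.\ the tangent-line inequality $\beta\le e^{\beta-1}$. This is the same inequality the paper's argument implicitly uses (the step $-1/\beta\ge -1$ for $\beta\ge 1$ inside the derivative bound), but your presentation isolates it as a single elementary fact rather than wrapping it in a Gr\"onwall-type argument. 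Both approaches are equally valid; yours is shorter.
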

\begin{proof}[Proof of Lemma~\ref{lem:BetaL}]
The equality in the statement follows from the fact that $\BetaD(\beta,\beta)$ is symmetric around $1/2$. Now we prove the inequalities.
We use two bounds: Firstly, $p(1-p) \leq 1/4$ for all $p \in [0,1]$. Secondly, $p(1-p) \geq p(1-p_*)$ for all $p \in [0,p_*]$. Thus
\begin{align*}
\pr{P \leftarrow \BetaD(\beta,\beta)}{P < p_*} = \frac{\int_0^{p_*} (p(1-p))^{\beta-1} \mathrm{d}p}{\int_0^1 (p(1-p))^{\beta-1} \mathrm{d} p} 
\geq{}& \frac{\int_0^{p_*} (p(1-p_*))^{\beta-1} \mathrm{d}p}{\int_0^1 (1/4)^{\beta-1} \mathrm{d} p}\\
={}&  \frac{(1-p_*)^{\beta-1} \int_0^{p_*} p^{\beta-1} \mathrm{d}p}{(1/4)^{\beta-1} }\\
={}& \left( 4(1-p_*) \right)^{\beta-1} \frac{p_*^\beta}{\beta}\\
={}& \left( 4 p_* (1-p_*) \right)^{\beta-1} \frac{p_*}{\beta}=: f_{p_*}(\beta).
\end{align*}
This proves the first inequality. Now we prove the second inequality by applying calculus to the function $f_{p_*}$ we have just defined.

We have $f_{p_*}(1) = p_*$ and $$f_{p_*}'(\beta) = \left( 4 p_* (1-p_*) \right)^{\beta-1}\frac{p_*}{\beta} \left( \log(4p_*(1-p_*)) - \frac{1}{\beta} \right) \geq f_{p_*}(\beta) \left( \log(4p_*(1-p_*))-1 \right).$$
This differential inequation implies $f_{p_*}(\beta) \geq p_* e^{(\log(4p_*(-p_*))-1)(\beta-1)}$, as required.
\end{proof}

\begin{proof}[Proof of Proposition \ref{prop:betatopk}]
By Lemma \ref{lem:BetaL}, for each $j \in [d]$, 
$$
\pr{P^j \leftarrow \BetaD(\beta,\beta)}{P^j > \frac{7}{8}} \geq \frac{1}{8} e^{(\log(4 \cdot (1/8) \cdot (7/8))-1) (\beta-1)} \geq \frac{1}{8} e^{-2(\beta-1)},
$$ 
Since $1 \leq \beta \leq 1+ \frac{1}{2} \log\left(\frac{d}{8\max\{2k,\minex\}}\right) \geq 1$, we have, for all $j \in [d]$, $$\pr{P^j \leftarrow \BetaD(\beta,\beta)}{P^j > \frac{7}{8}} \geq \frac{\max\{2k,\minex\}}{d}.$$
Let $A_j = \mathbb{I}[P^j > 7/8]$ be the indicator of the above event for each $j \in [d]$. Let $a:=\ex{}{A_1}=\ex{}{A_2}=\cdots=\ex{}{A_d}$. Then $\ex{}{\sum_{j \in [d]} A_j} = d a \geq \max\{2k,\minex\}$. Since $A_j \in \{0,1\}$, $\ex{}{A_j^2}=\ex{}{A_j}=a$ and $\var{}{A_j} =a(1-a)$ for all $j \in [d]$ and, hence, $\var{}{\sum_{j \in [d]} A_j} = d a (1-a)$. By Chebyshev's inequality, 
\begin{align*}
&\pr{}{\exists s \subset [d] ~~~ |s|=k ~\wedge~ \forall j \in s ~~P^j>7/8} \\
={} &\pr{}{\sum_{j \in [d]} A_j \geq k} \\
\geq{} &\pr{}{\sum_{j \in [d]} A_j \geq \frac12 \max\{2k,\minex\}} \\
\geq{} &\pr{}{\sum_{j \in [d]} A_j \geq \frac12 \ex{}{\sum_{j \in [d]} A_j}}\\
\geq{} &1- \pr{}{\left| \sum_{j \in [d]} A_j - \ex{}{\sum_{j \in [d]} A_j} \right| \geq \frac12 \ex{}{\sum_{j \in [d]} A_j}}\\
\geq{} &1 - \frac{\var{}{\sum_{j \in [d]} A_j}}{\left(\frac12 \ex{}{\sum_{j \in [d]} A_j}\right)^2}\\
={} &1 - 4 \frac{d a (1-a)}{(da)^2} \\
\geq{} &1 - \frac{4}{da}
\geq{} 1 - \frac{4}{\max\{2k,\minex\}}
\geq{} \frac 6 7.
\end{align*}
Finally, 
\begin{align*}
\ex{}{\max_{s \subset [d] ~:~ |s|=k} \sum_{j \in s} P^j} &\geq
 \frac{7}{8} k \pr{}{\exists s \subset [d] ~~~ (|s|=k)~\wedge~(\forall j \in s ~~P^j>7/8)} \\
 &\geq  \frac{7}{8} k \frac{6}{7} = \frac{3}{4} k.
\end{align*}
This completes the proof.
\end{proof}

\jnote{empirical vs.~distributional is in the comments.}

\section{Proof of the Main Lower Bound (Theorem \ref{thm:general})}
The goal of this section is to prove the following theorem from the introduction.
\begin{thm}[Theorem \ref{thm:general} restated]\label{thm:general-body}
Let $\beta,\gamma,k>0$ and $n,d \in \mathbb{N}$ be a fixed set of parameters.  Let $P^1, \cdots, P^d$ be independent draws from $\BetaD(\beta,\beta)$ and let $X \in (\{0,1\}^d)^n$ be a random dataset such that every $X_i^j$ is independent (conditioned on $P$) and $\E[X_i^j]=P^j$ for every $i \in [n]$ and $j \in [d]$.

Let $M : (\{0,1\}^d)^n \to [-1,1]^d$ be a $(1,\beta\gamma k / n d)$-differentially private algorithm and assume that $M$ satisfies the condition $\ex{P,X,M}{\|M(X)\|_2^2} = k$ and the accuracy condition
\begin{equation*}
\ex{P,X,M}{\sum_{j \in [d]} M(X)^j \cdot \left(P^j - \frac12\right)} \geq \gamma k.
\end{equation*}
Then $n \geq \gamma \beta \sqrt{k}.$
\end{thm}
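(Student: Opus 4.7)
The plan is to introduce the fingerprinting-style statistic
$$Z \;:=\; \sum_{i \in [n]} \langle M(X),\; X_i - P\rangle \;=\; n\,\langle M(X),\;\overline{X} - P\rangle,$$
where $\overline{X} = (1/n)\sum_i X_i$, and then derive incompatible upper and lower bounds on $\mathbb{E}[Z]$.

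\emph{Lower bound from accuracy and the Beta conjugate prior.} Since $\BetaD(\beta,\beta)$ is conjugate to the Bernoulli, for each $j$ the posterior mean is $\mathbb{E}[P^j\mid X] = (\beta + n\overline{X}^j)/(2\beta+n)$. A direct calculation gives
$$\mathbb{E}[\overline{X}^j - P^j \mid X] \;=\; \frac{2\beta}{n}\!\left(\mathbb{E}[P^j\mid X] - \tfrac12\right),$$
and because $M(X)$ is $X$-measurable, the tower rule yields
$$\mathbb{E}[Z] \;=\; 2\beta\cdot\mathbb{E}\!\left[\sum_j M(X)^j\bigl(P^j - \tfrac12\bigr)\right] \;\geq\; 2\gamma\beta k$$
by the accuracy hypothesis.

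\emph{Upper bound from differential privacy.} Let $X^{(i)}$ denote $X$ with row $X_i$ replaced by an independent $\tilde{X}_i$ drawn from the same conditional distribution as $X_i$ given $P$. Then $M(X^{(i)})$ is independent of $X_i$ given $P$ and $\mathbb{E}[X_i - P\mid P] = 0$, so $\mathbb{E}[\langle M(X^{(i)}), X_i - P\rangle] = 0$. The datasets $X$ and $X^{(i)}$ are neighbors, so the $(1,\delta)$-DP hypothesis forces the distributions of $M(X)$ and $M(X^{(i)})$ to be close, and I would translate this into a quantitative comparison of the two expectations. The critical savings come from the $\ell_2$ constraint $\mathbb{E}[\|M(X)\|_2^2]=k$ together with the fact that the coordinates $X_i^j - P^j$ are independent and zero-mean given $P$: the cross-terms in the second moment vanish, giving
$$\mathbb{E}\!\left[\langle M(X^{(i)}), X_i - P\rangle^2 \,\middle|\, P\right] \;=\; \sum_j \mathbb{E}\!\left[(M(X^{(i)})^j)^2 \,\middle|\, P\right] \cdot P^j(1-P^j) \;\leq\; \tfrac{1}{4}\mathbb{E}\!\left[\|M(X^{(i)})\|_2^2 \,\middle|\, P\right],$$
which integrates to $k/4$; in particular the baseline statistic has mean absolute value at most $\tfrac{1}{2}\sqrt k$. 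Truncating $\langle m, X_i-P\rangle$ at a level $T=\Theta(\sqrt k)$ and applying $(1,\delta)$-DP to the resulting bounded function, together with a Chebyshev tail bound for the truncation error, gives the per-row estimate
$$\mathbb{E}[\langle M(X), X_i - P\rangle] \;\leq\; O(\sqrt k) + O(\delta d),$$
and summing over $i\in[n]$, $\mathbb{E}[Z] \leq O(n\sqrt k) + O(n\delta d) = O(n\sqrt k) + O(\gamma\beta k)$ under the specified $\delta = \gamma\beta k/(nd)$.

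\emph{Combining and the main obstacle.} The two bounds together give $2\gamma\beta k \leq O(n\sqrt k) + O(\gamma\beta k)$, which, provided the constants in the $\delta$-error term are strictly less than $2$, yields $n = \Omega(\gamma\beta\sqrt k)$ as claimed. The hard part is the DP upper bound: applying $(1,\delta)$-DP naively to the pointwise function $m\mapsto\langle m, X_i-P\rangle$ only yields an $O(d)$ estimate per row (its worst-case magnitude is $d$), and the improvement to $O(\sqrt k)$ must be extracted from the $\ell_2$ constraint on $M$ and the product structure of $X_i-P$, all while tracking constants tightly enough that the $n\delta d$ error cannot absorb the $2\gamma\beta k$ lower bound.
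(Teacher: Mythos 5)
Your proposal is correct and its two halves line up with the paper's two lemmas, but the accuracy half takes a genuinely different route. For the lower bound $\ex{}{Z} \geq 2\beta\gamma k$, the paper proves a ``fingerprinting lemma'' (Lemma \ref{lem:FPCb}) by integration by parts against the beta density, showing $\ex{}{f(X)\sum_i (X_i - P)} = 2\beta\,\ex{}{g(P)(P-\tfrac12)}$; you instead use the exact Beta--Bernoulli posterior mean $\ex{}{P^j \mid X} = \frac{\beta + n\overline{X}^j}{2\beta+n}$ and the tower rule. Your identity $\ex{}{\overline{X}^j - P^j \mid X} = \frac{2\beta}{n}\bigl(\ex{}{P^j\mid X}-\tfrac12\bigr)$ checks out, and the conditional-independence of $M$'s coins from $P$ given $X$ makes the tower-rule steps rigorous, so this recovers Lemma \ref{lem:lower} as an exact equality. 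Notably, the paper presents the conjugate-prior computation only as ``imprecise intuition'' in the introduction and then proves the lemma differently; your version shows the intuition can be made into a proof, and it generalizes to $\BetaD(\alpha,\beta)$ just as the paper's lemma does. For the privacy upper bound your structure matches Lemma \ref{lem:upper} (replace row $i$, exploit conditional independence and the vanishing cross-terms to get second moment $\leq k/4$, hence $\ex{}{|\langle M(X_{\sim i}), X_i-P\rangle|}\leq\tfrac12\sqrt{k}$), though your execution via truncation at $T=\Theta(\sqrt{k})$ plus a Chebyshev tail bound is slightly more roundabout than the paper's, which writes $\ex{}{\max\{0,Z_i\}}=\int_0^{\Delta}\pr{}{Z_i\geq z}\,\mathrm{d}z$ with $\Delta=d/2$, applies the DP inequality pointwise under the integral, and bounds the resulting $\ex{}{\max\{0,\tilde Z_i\}}$ by $\sqrt{\ex{}{\tilde Z_i^2}}$ directly---yielding the same $e^{\varepsilon}\tfrac12\sqrt{k}+\tfrac{d}{2}\delta$ per row with no truncation bookkeeping. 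Your concern about constants is the right one and resolves favorably: with $\delta=\beta\gamma k/nd$ the $\delta$-term contributes $n\cdot\tfrac{d}{2}\delta = \tfrac12\beta\gamma k$, leaving $\tfrac32\beta\gamma k \leq \tfrac{e}{2}n\sqrt{k}$ and hence $n\geq\tfrac{3}{e}\beta\gamma\sqrt{k}\geq\beta\gamma\sqrt{k}$.
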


For the remainder of this section, we will fix the following parameters and variables.  Fix $\beta,\gamma,\varepsilon,\delta,\Delta>0$ and $n,d \in \mathbb{N}$.  Let $M : (\{0,1\}^d)^n \to \mathbb{R}^d$ satisfy $(\varepsilon,\delta)$-differential privacy. Let $P^1,\cdots,P^d \leftarrow \BetaD(\beta,\beta)$ be independent samples. Define a random variable $X \in (\{0,1\}^d)^n$ to have independent entries (conditioned on $P$) where $\mathbb{E}[X_i^j]=P^j$ for all $i \in [n]$ and $j \in [d]$. 

The crux of the proof is to analyze the expected value of $\sum_{i \in [n], j \in [d]} M(X)^j (X_i^j - P^j).$  To this end, for every $i \in [n]$ and $j \in [d]$, we define the random variables $$Z_i^j = M(X)^j \cdot (X_i^j - P^j), ~~~~~ Z_i = \sum_{j \in [d]} Z_i^j, ~~~~~ Z^j = \sum_{i \in [n]} Z_i^j, ~~~~~ Z = \sum_{i \in [n] \atop j \in [d]} Z_i^j.$$  At a high level, we will show that when the size of the dataset $n$ is too small, we obtain contradictory upper and lower bounds on $\ex{P,X,M}{Z}$.

\subsection{Upper Bound via Privacy}
First we prove that $(\eps, \delta)$-differential privacy of $M$ implies an upper bound on $\ex{P,X,M}{Z}$.

\tnote{We can tighten the following analysis to get our lovely $\log(1/\delta)$ dependence. The step for improvement is where the $\min\{1,\cdot\}$ gets dropped.}

\begin{lem}\label{lem:upper}
Suppose that $\|M(X)\|_1\leq 2\Delta$ with probability 1. Then
$$\ex{P,X,M}{Z} \leq n \cdot \left(e^\varepsilon \frac12 \sqrt{\ex{P,X,M}{\|M(X)\|_2^2}} + \Delta\delta\right).$$
\end{lem}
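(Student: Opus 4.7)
The plan is to combine a resampling coupling with the standard expectation form of $(\varepsilon,\delta)$-differential privacy.

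\emph{Step 1 (Resampling coupling).} For each $i \in [n]$, let $X'_i$ be an independent fresh copy of $X_i$ conditional on $P$ (so $X'^j_i \sim \mathsf{Bernoulli}(P^j)$ independently across $j$), and let $X_{\sim i}$ denote the dataset obtained from $X$ by replacing row $i$ with $X'_i$. Then $X$ and $X_{\sim i}$ are neighboring datasets, $X_{\sim i}$ has the same distribution as $X$, and $M(X_{\sim i})$ is independent of $X_i$ conditional on $P$.

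\emph{Step 2 (Vanishing baseline).} Because $\E[X_i \mid P] = P$ and $M(X_{\sim i}) \perp X_i \mid P$,
\[ \E\bigl[\langle M(X_{\sim i}), X_i - P\rangle\bigr] = 0, \]
so $\E[Z_i] = \E\bigl[\langle M(X), X_i - P\rangle\bigr] - \E\bigl[\langle M(X_{\sim i}), X_i - P\rangle\bigr]$. This is the standard ``subtract a zero'' that turns the privacy/accuracy tension into a DP comparison.

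\emph{Step 3 (Apply differential privacy).} Condition on $(P, X_{-i}, X_i, X'_i)$ and set $v := X_i - P \in [-1,1]^d$. Split $\langle M(X), v\rangle$ into a signed combination of non-negative functionals of $M(X)$ using $M(X)^j = M(X)^j_+ - M(X)^j_-$ together with the signs of the $v^j$'s. Each piece is a $[0, 2\Delta]$-valued function of $M$ (using $\|M(X)\|_1 \leq 2\Delta$ and $\|v\|_\infty \leq 1$), so the DP expectation inequality $\E[f(M(X))] \leq e^\varepsilon \E[f(M(X_{\sim i}))] + \delta\|f\|_\infty$ applies to each; recombining gives
\[ \E_M\bigl[\langle M(X), v\rangle\bigr] \;\leq\; e^\varepsilon\, \E_M\bigl[|\langle M(X_{\sim i}), v\rangle|\bigr] \;+\; O(\Delta\delta). \]

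\emph{Step 4 (Jensen and the $\tfrac12$ factor).} Given $P$ and $M(X_{\sim i})$, the inner product $\langle M(X_{\sim i}), v\rangle$ is a sum of independent mean-zero terms of coordinate variance $(M(X_{\sim i})^j)^2 \, P^j(1-P^j) \leq \tfrac14 (M(X_{\sim i})^j)^2$. Jensen's inequality then yields
\[ \E\bigl[|\langle M(X_{\sim i}), v\rangle|\bigr] \;\leq\; \sqrt{\E\bigl[\langle M(X_{\sim i}), v\rangle^2\bigr]} \;\leq\; \tfrac12 \sqrt{\E\bigl[\|M(X_{\sim i})\|_2^2\bigr]} \;=\; \tfrac12 \sqrt{\E\bigl[\|M(X)\|_2^2\bigr]}, \]
using $X_{\sim i} \stackrel{d}{=} X$ in the last equality. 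Summing over $i$ gives the claim.

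\emph{Main obstacle.} The delicate step is Step 3: the linear functional $\langle \cdot, v\rangle$ is signed while DP's expectation inequality applies only to non-negative functions. Getting the clean prefactor $e^\varepsilon$ (rather than a spurious $(e^\varepsilon - 1)$ or $(e^\varepsilon - e^{-\varepsilon})$) and the slack exactly $\Delta\delta$ per step requires the right sign-splitting of $M(X)^j$ and $v^j$ into four non-negative products, applying DP to each, and recombining without losing constants. Once this bookkeeping is done, the $\tfrac12$ factor in Step 4 is exactly the standard deviation bound $\sqrt{P^j(1-P^j)} \leq \tfrac12$, and everything else is routine.
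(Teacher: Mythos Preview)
Your Steps 1, 2, and 4 are exactly right and match the paper. The gap is in Step~3, and it is precisely the ``main obstacle'' you flag --- but the four-way coordinate-wise sign-splitting you propose does not actually close it.

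Write $\langle M(X),v\rangle = A - B - C + D$ with
\[
A=\sum_j (M^j)_+ (v^j)_+,\quad B=\sum_j (M^j)_- (v^j)_+,\quad C=\sum_j (M^j)_+ (v^j)_-,\quad D=\sum_j (M^j)_- (v^j)_-.
\]
To upper-bound $A-B-C+D$ you need upper bounds on $A,D$ and \emph{lower} bounds on $B,C$. Differential privacy gives you $\E[A]\le e^\varepsilon\E[A']+2\Delta\delta$ and $\E[B]\ge e^{-\varepsilon}\E[B']-2\Delta\delta$, etc. Recombining yields at best
\[
\E[\langle M(X),v\rangle]\ \le\ e^\varepsilon\E[A'+D']-e^{-\varepsilon}\E[B'+C']+O(\Delta\delta),
\]
and this is \emph{not} dominated by $e^\varepsilon\,\E\bigl[|\langle M(X_{\sim i}),v\rangle|\bigr]$: take $A'=B'=1$, $C'=D'=0$, so $\langle M',v\rangle=0$ but the bound is $e^\varepsilon-e^{-\varepsilon}>0$. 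What you actually recover is $e^\varepsilon\,\E\bigl[\sum_j |M(X_{\sim i})^j|\,|v^j|\bigr]+O(\Delta\delta)$; the coordinate-wise split destroys the cancellation inside the inner product. Feeding this into Step~4 then gives $\tfrac12\,\E[\|M\|_1]$ rather than $\tfrac12\sqrt{\E[\|M\|_2^2]}$, and for the top-$k$ application ($\|M\|_1=\|M\|_2^2=k$) that is $k/2$ instead of $\sqrt{k}/2$ --- the whole lower bound collapses to the trivial $n=\Omega(\log d)$.

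The fix is not to split coordinate-wise at all. The paper bounds $\E[Z_i]\le\E[\max\{0,Z_i\}]$ and applies DP to the single nonnegative functional $y\mapsto \max\{0,\langle y,v\rangle\}$ (equivalently, to the level sets $\{y:\langle y,v\rangle\ge z\}$ via the tail integral). This preserves the inner product intact, yields
\[
\E[Z_i]\ \le\ e^\varepsilon\,\E\!\left[\max\{0,\langle M(X_{\sim i}),v\rangle\}\right]+\Delta\delta\ \le\ e^\varepsilon\sqrt{\E[\langle M(X_{\sim i}),v\rangle^2]}+\Delta\delta,
\]
and now your Step~4 goes through verbatim to give the $\tfrac12\sqrt{\E[\|M\|_2^2]}$ factor.
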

\begin{proof}
Fix $i \in [n]$ and also fix the vector $P \in [0,1]^d$.
Since $\sum_{j \in [d]} |M(X)^j| \leq 2\Delta$, we have 
$$
Z_i = \sum_{j \in [d]} M(X)^j \left(P^j - \frac12\right) \leq \Delta.$$
Let $X_{\sim i} \in (\{0,1\}^d)^n$ denote $X$ with $X_i$ replaced with an independent draw from $P$. In particular, the marginal distribution of $X_{\sim i}$ is the same as $X$. However, conditioned on $P$, $X_i$ is independent from $X_{\sim i}$. By the differential privacy assumption $M(X)$ and $M(X_{\sim i})$ are indistinguishable. We can use this fact to bound the expectation of $Z_i$ in the following calculation.
\begin{align}
\ex{X,M}{Z_i} &\leq \ex{X,M}{\max\{0,Z_i\}} \notag \\
&= \int_0^\Delta \pr{X,M}{Z_i \geq z} \mathrm{d}z \notag \\
&= \int_0^\Delta  \pr{X,M}{\sum_{j \in [d]} M(X)^j \left(X_i^j - P^j\right) \geq z} \mathrm{d}z \label{eq:longcalc1}
\end{align}
Now, defining the event $T(z) :=  \left\{y \in \mathbb{R}^d : \sum_{j \in [d]} y^j \left(X_i^j - P^j\right) \geq z \right\}$, we can apply $(\eps, \delta)$-differential privacy to \eqref{eq:longcalc1} to obtain
\begin{align}
\eqref{eq:longcalc1}
&= \int_0^\Delta  \pr{X,M}{M(X) \in T(z)} \mathrm{d}z \notag \\
&\leq \int_0^\Delta \min\left\{ 1,~e^\varepsilon \pr{X,X_{\sim i},M}{M(X_{\sim i}) \in T(z)} +\delta \right\}\mathrm{d}z \notag \\
&= \int_0^\Delta \min\left\{ 1,~e^\varepsilon \pr{X,X_{\sim i},M}{\sum_{j \in [d]} M(X_{\sim i})^j \left(X_i^j - P^j\right) \geq z } +\delta \right\}\mathrm{d}z \label{eq:longcalc2}
\end{align}
Observe that in \eqref{eq:longcalc2}, $M(X_{\sim i})$ is independent of $X_i$, which allows us to bound \eqref{eq:longcalc2} as follows
\begin{align*}
\eqref{eq:longcalc2} 
&\leq \int_0^\Delta e^\varepsilon \pr{X,X_{\sim i},M}{\sum_{j \in [d]} M(X_{\sim i})^j \left(X_i^j - P^j\right) \geq z } +\delta \mathrm{d}z \\
&= e^\varepsilon \ex{X,X_{\sim i},M}{ \max\left\{0,\sum_{j \in [d]} M(X_{\sim i})^j \left(X_i^j - P^j\right) \right\}} + \Delta \delta\\
&\leq e^\varepsilon \sqrt{\ex{X,X_{\sim i},M}{ \left(\sum_{j \in [d]} M(X_{\sim i})^j \left(X_i^j - P^j\right) \right)^2}} + \Delta \delta\\
&= e^\varepsilon \sqrt{\ex{X_{\sim i},M}{ \sum_{j \in [d]} (M(X_{\sim i})^j)^2 \ex{X_i}{\left(X_i^j - P^j\right)^2}}} + \Delta \delta\\
&\leq e^\varepsilon \sqrt{\ex{X_{\sim i},M}{ \sum_{j \in [d]} (M(X_{\sim i})^j)^2 \frac{1}{4}}} + \Delta \delta
= e^\varepsilon \frac12 \sqrt{\ex{X,M}{\|M(X)\|_2^2} } + \Delta \delta.
\end{align*}

Finally, we sum over $i$ and average over $P$ to obtain
$$\ex{P,X,M}{Z} \leq n \cdot \ex{P}{e^\varepsilon \frac12 \sqrt{\ex{X,M}{\|M(X)\|_2^2} } + \Delta \delta} \leq n \cdot \left( e^\varepsilon \frac12 \sqrt{\ex{P,X,M}{\|M(X)\|_2^2} } + \Delta \delta\right),$$
where the final inequality follows from Jensen's inequality and the concavity of the function $x \mapsto \sqrt{x}$.
\end{proof}

\subsection{Lower Bound via Accuracy}
The more involved part of the proof is to use the accuracy assumption 
\begin{equation*}
\ex{P,X,M}{\sum_{j \in [d]} M(X)^j \cdot \left(P^j - \frac12\right)} \geq \gamma \cdot \ex{P,X,M}{\|M(X)\|_2^2}.
\end{equation*}
to prove a lower bound on $\ex{P,X,M}{Z}$.  In order to do so we need to develop a technical tool that we call a ``fingerprinting lemma,'' which is a refinement of similar lemmas from prior work~\cite{SteinkeU15b, DworkSSUV15, BunSU17} that more carefully exploits the properties of the distribution $P$.

\subsubsection{Fingerprinting Lemma}

To keep our notation compact, throughout this section we will use the shorthand $X_{1 \cdots n} \leftarrow p$ to denote that $X_1, X_2, \cdots, X_n \in \{0,1\}$ are independent random variables each with mean $p$.


\begin{lem} [Rescaling of~\cite{SteinkeU15a, DworkSSUV15}]\label{lem:FPCd}
Let $f : \{0,1\}^n \to \mathbb{R}$. Define $g : [0,1] \to \mathbb{R}$ by 
$$
g(p) = \ex{X_{1 \cdots n} \leftarrow p}{f(X)}.
$$
Then
$$
\ex{X_{1 \cdots n} \leftarrow p}{f(X) \sum_{i \in [n]} (X_i-p)} = p (1-p) g'(p)
$$
for all $p \in [0,1]$.
\end{lem}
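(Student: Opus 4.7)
The plan is to prove this by a direct computation: expand $g(p)$ as an explicit polynomial in $p$, differentiate, and match terms. Specifically, I would first write
\[
g(p) = \sum_{x \in \{0,1\}^n} f(x) \, p^{s(x)} (1-p)^{n-s(x)},
\]
where $s(x) := \sum_{i \in [n]} x_i$ is the Hamming weight, since this is just the probability mass function of $n$ i.i.d.\ $\mathsf{Bernoulli}(p)$ variables weighted by $f$.

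Next I would differentiate termwise. The product rule gives
\[
\frac{\mathrm{d}}{\mathrm{d}p} \left[ p^{s(x)} (1-p)^{n-s(x)} \right] = \left[ \frac{s(x)}{p} - \frac{n-s(x)}{1-p} \right] p^{s(x)} (1-p)^{n-s(x)},
\]
so after multiplying through by $p(1-p)$ the bracketed expression simplifies to $s(x)(1-p) - (n-s(x))p = s(x) - np$. This is the key algebraic step that replaces the derivative with a centered moment.

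Finally, I would recognize the right-hand side as an expectation:
\[
p(1-p) g'(p) = \sum_{x \in \{0,1\}^n} f(x) \bigl(s(x) - np\bigr) p^{s(x)}(1-p)^{n-s(x)} = \ex{X_{1 \cdots n} \leftarrow p}{f(X) \sum_{i \in [n]}(X_i - p)},
\]
using $s(X) - np = \sum_i (X_i - p)$ and linearity of expectation.

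There is really no obstacle here — the lemma is essentially a one-line calculation once $g(p)$ is written in its explicit polynomial form, and the authors advertise it as a ``rescaling'' of prior results. The only thing worth being careful about is the boundary case $p \in \{0,1\}$, where both sides vanish (since $p(1-p) = 0$ and the expectation on the left is over a point mass, making $X_i - p = 0$ almost surely), so the identity trivially extends.
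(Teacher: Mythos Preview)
Your proposal is correct and takes essentially the same approach as the paper: both expand $g(p)$ as a sum over $x \in \{0,1\}^n$ and differentiate termwise. The only cosmetic difference is that you collapse the product $\prod_i \Pr[X_i=x_i]$ into $p^{s(x)}(1-p)^{n-s(x)}$ and differentiate that single expression, whereas the paper applies the product rule factor-by-factor, first checking the one-bit identity $p(1-p)\frac{\mathrm d}{\mathrm dp}\Pr[X=x]=(x-p)\Pr[X=x]$ and then summing over $i$; either way one arrives at the same centered-moment expression $\sum_i(x_i-p)$.
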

\begin{proof}[Proof of Lemma \ref{lem:FPCd}]
Firstly, $\pr{X \leftarrow p}{X=1} = p$ and $\pr{X \leftarrow p}{X=0} = 1-p$. Thus \begin{equation}p (1-p)\frac{\mathrm{d}}{\mathrm{d} p} \pr{X \leftarrow p}{X=1}  = p (1-p)\frac{\mathrm{d}}{\mathrm{d} p} p = p (1-p) = (1-p) \pr{X \leftarrow p}{X=1}\label{eqn:1}\end{equation} and \begin{equation}p(1-p)\frac{\mathrm{d}}{\mathrm{d} p} \pr{X \leftarrow p}{X=0} =  p(1-p) \frac{\mathrm{d}}{\mathrm{d} p}  (1-p)= p(1-p)(-1) = (0-p) \pr{X \leftarrow p}{X=0}.\label{eqn:0}\end{equation}
Hence
\begin{align*}
p(1-p) g'(p) 
={}& p (1-p) \frac{\mathrm{d}}{\mathrm{d} p} \sum_{x \in \{0,1\}^n} \pr{X_{1 \cdots n} \leftarrow p}{X=x} f(x) \\
={}& \sum_{x \in \{0,1\}^n} f(x) p (1-p)\frac{\mathrm{d}}{\mathrm{d} p} \prod_{i \in [n]} \pr{X \leftarrow p}{X=x_i}\\
={}& \sum_{x \in \{0,1\}^n} f(x) \sum_{i \in [n]} \left( \prod_{j \in [n] \setminus \{i\}} \pr{X \leftarrow p}{X=x_j} \right) \left( p (1-p)\frac{\mathrm{d}}{\mathrm{d} p} \pr{X \leftarrow p}{X=x_i} \right)\\
\left(\text{by \eqref{eqn:1} and \eqref{eqn:0}}\right)~~~={}& \sum_{x \in \{0,1\}^n} f(x) \sum_{i \in [n]} \left( \prod_{j \in [n] \setminus \{i\}} \pr{X \leftarrow p}{X=x_j} \right) \left( (x_i-p) \pr{X \leftarrow p}{X=x_i} \right)\\
={}& \sum_{x \in \{0,1\}^n} f(x) \sum_{i \in [n]} (x_i -p) \left( \prod_{j \in [n]}  \pr{X \leftarrow p}{X=x_j} \right) \\
={}& \ex{X_{1 \cdots n} \leftarrow p}{f(X) \sum_{i \in [n]} (X_i-p)}.
\end{align*}
\end{proof}

\begin{lem}\label{lem:FPCb}
Let $f : \{0,1\}^n \to \mathbb{R}$ and let $\alpha,\beta > 0$. Define $g : [0,1] \to \mathbb{R}$ by $$g(p) = \ex{X_{1 \cdots n} \leftarrow p}{f(X)}.$$ Then $$\ex{P \leftarrow \BetaD(\alpha,\beta) \atop X_{1 \cdots n} \leftarrow P}{f(X) \sum_{i \in [n]} (X_i-P)} =  (\alpha+\beta)\ex{P \leftarrow \BetaD(\alpha,\beta)}{g(P)\left(P-\frac{\alpha}{\alpha+\beta}\right)}.$$\end{lem}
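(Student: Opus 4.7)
The plan is to combine Lemma \ref{lem:FPCd}, which handles the inner expectation over $X$ for a fixed $p$, with integration by parts against the beta density, using the fact that $p(1-p)$ exactly cancels the $-1$ in the exponents of the beta density.

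First, by the tower property and Lemma \ref{lem:FPCd} applied pointwise for each fixed $P=p$,
\[
\ex{P \leftarrow \BetaD(\alpha,\beta) \atop X_{1 \cdots n} \leftarrow P}{f(X) \sum_{i \in [n]} (X_i-P)} = \ex{P \leftarrow \BetaD(\alpha,\beta)}{P(1-P) g'(P)}.
\]
Writing this expectation as an integral against the beta density and using the identity $p(1-p) \cdot p^{\alpha-1}(1-p)^{\beta-1} = p^{\alpha}(1-p)^{\beta}$, the right-hand side becomes
\[
\frac{1}{\BetaF(\alpha,\beta)} \int_0^1 g'(p) \, p^{\alpha}(1-p)^{\beta} \, \mathrm{d} p.
\]

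Next I would integrate by parts with $u = p^\alpha(1-p)^\beta$ and $\mathrm{d}v = g'(p)\mathrm{d}p$. Since $\alpha,\beta>0$, the boundary term $[g(p) p^\alpha(1-p)^\beta]_0^1$ vanishes at both endpoints. The derivative of the weight factors as
\[
\frac{\mathrm{d}}{\mathrm{d} p}\bigl(p^\alpha(1-p)^\beta\bigr) = p^{\alpha-1}(1-p)^{\beta-1}\bigl(\alpha - (\alpha+\beta)p\bigr),
\]
so after integration by parts and rearranging,
\[
\int_0^1 g'(p) p^{\alpha}(1-p)^{\beta} \, \mathrm{d} p = (\alpha+\beta)\int_0^1 g(p)\left(p - \frac{\alpha}{\alpha+\beta}\right) p^{\alpha-1}(1-p)^{\beta-1} \, \mathrm{d} p.
\]
Dividing by $\BetaF(\alpha,\beta)$ recasts the right-hand side as $(\alpha+\beta)\ex{P\leftarrow \BetaD(\alpha,\beta)}{g(P)(P-\alpha/(\alpha+\beta))}$, which is exactly the desired identity.

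There is no real obstacle here: the lemma amounts to integration by parts, and the only thing to check carefully is that the boundary terms at $p=0$ and $p=1$ vanish, which follows from the assumption $\alpha,\beta>0$ together with the fact that $g$ is a polynomial in $p$ (hence bounded on $[0,1]$). Conceptually, this is the standard trick underlying fingerprinting-style arguments: the derivative of the beta density against the ``score function'' $p-\alpha/(\alpha+\beta)$ is proportional to the beta density itself, which is what converts the derivative $g'$ coming from Lemma \ref{lem:FPCd} into the centered expectation of $g$ appearing on the right.
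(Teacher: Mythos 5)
Your proposal is correct and follows essentially the same route as the paper's proof: apply Lemma \ref{lem:FPCd} pointwise, absorb the $p(1-p)$ factor into the beta density, and integrate by parts, with the boundary terms vanishing because $\alpha,\beta>0$. No gaps.
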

This is the form of the lemma we use. Note that $\ex{}{P}=\alpha/(\alpha+\beta)$.
\begin{proof}[Proof of Lemma \ref{lem:FPCb}]
The proof is a calculation using integration by parts.  Using Lemma \ref{lem:FPCd} and the fundamental theorem of calculus, we have
\begin{align*}
\ex{P \leftarrow \BetaD(\alpha,\beta) \atop X_{1 \cdots n} \leftarrow P}{f(X) \sum_{i \in [n]} (X_i-P)} =& \ex{P \leftarrow \BetaD(\alpha,\beta)}{P(1-P) g'(P)}\\
={}& \int_0^1 p(1-p)g'(p) \cdot \frac{p^{\alpha-1}(1-p)^{\beta-1}}{\BetaF(\alpha,\beta)} \mathrm{d}p\\
={}& \frac{1}{{\BetaF(\alpha,\beta)}}\int_0^1 g'(p) \cdot {p^{\alpha}(1-p)^{\beta}} \mathrm{d}p\\
={}& \frac{1}{{\BetaF(\alpha,\beta)}}\int_0^1 \left( \frac{\mathrm{d}}{\mathrm{d}p} \left( g(p) \cdot {p^{\alpha}(1-p)^{\beta}} \right) - g(p) \cdot \frac{\mathrm{d}}{\mathrm{d}p} \left( p^\alpha (1-p)^\beta \right) \right)\mathrm{d}p\\
={}& \frac{1}{{\BetaF(\alpha,\beta)}}\left(  g(1) \cdot 1^{\alpha}(1-1)^{\beta} - g(0) \cdot 0^{\alpha}(1-0)^{\beta} \right) \\
&- \frac{1}{{\BetaF(\alpha,\beta)}}\int_0^1  g(p) \cdot \frac{\mathrm{d}}{\mathrm{d}p} \left( p^\alpha (1-p)^\beta \right)\mathrm{d}p\\
={}&0- \frac{1}{{\BetaF(\alpha,\beta)}}\int_0^1  g(p) \cdot (\alpha-(\alpha+\beta)p) p^{\alpha-1} (1-p)^{\beta-1})\mathrm{d}p\\
={}&  \ex{P \leftarrow \BetaD(\alpha,\beta)}{g(P) ((\alpha+\beta)P - \alpha)}.
\end{align*}
This completes the proof.
\end{proof}

\jnote{There is a potentially useful cor commented out here.}

\subsubsection{Using the Fingerprinting Lemma}

Now we can use Lemma \ref{lem:FPCb} to prove a lower bound 

\begin{lem}\label{lem:lower}
$$\ex{P,X,M}{Z} \geq 2\beta \ex{P,X,M}{\sum_{j \in [d]} M(X)^j \left(P^j - \frac12\right)}$$
\end{lem}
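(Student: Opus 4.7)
The plan is to apply the fingerprinting lemma (Lemma \ref{lem:FPCb}) one coordinate at a time, with $\alpha = \beta$ so that the prior mean matches $\tfrac12$. First I would rewrite
\[
\ex{P,X,M}{Z} \;=\; \sum_{j \in [d]} \ex{P,X,M}{M(X)^j \sum_{i \in [n]} (X_i^j - P^j)},
\]
and then for each fixed $j$ condition on the ``outside world'' consisting of the other columns $X^{-j} \in \{0,1\}^{n \times (d-1)}$, the other coordinates $P^{-j}$, and the internal coins of $M$. By construction $P^j \sim \BetaD(\beta,\beta)$ and $X^j_1,\dots,X^j_n$ are i.i.d.\ Bernoulli$(P^j)$ independent of this conditioning, while $M(X)^j$ becomes a deterministic function $f\colon \{0,1\}^n \to \mathbb{R}$ of $X^j$ alone.

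Next I would apply Lemma \ref{lem:FPCb} to this $f$ with $\alpha = \beta$, noting that $\alpha/(\alpha+\beta) = 1/2$. This gives, conditional on the outside world,
\[
\ex{P^j, X^j}{M(X)^j \sum_{i \in [n]} (X_i^j - P^j)} \;=\; 2\beta \cdot \ex{P^j}{g(P^j)\left(P^j - \tfrac12\right)},
\]
where $g(p) = \ex{X^j_{1\cdots n} \leftarrow p}{M(X)^j}$. By the tower property, $\ex{P^j}{g(P^j)(P^j - 1/2)} = \ex{P^j, X^j}{M(X)^j (P^j - 1/2)}$, since $P^j$ is measurable with respect to the conditioning on which $g$ is defined.

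Finally, I would take the outer expectation over the outside world and sum over $j$. Since the identity above holds conditionally and the factor $2\beta$ is constant, linearity of expectation yields
\[
\ex{P,X,M}{Z} \;=\; 2\beta \sum_{j \in [d]} \ex{P,X,M}{M(X)^j \left(P^j - \tfrac12\right)} \;=\; 2\beta \, \ex{P,X,M}{\sum_{j \in [d]} M(X)^j\left(P^j - \tfrac12\right)},
\]
which is exactly the claimed inequality (in fact with equality). The only delicate point, and the step I would write most carefully, is the conditioning argument that lets us treat $M(X)^j$ as a function of $X^j$ alone so that Lemma \ref{lem:FPCb} applies---this hinges on the independence of the pairs $(P^j, X^j)$ across $j$ and on $M$'s internal randomness being independent of the data.
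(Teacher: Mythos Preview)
Your proposal is correct and follows essentially the same approach as the paper: apply Lemma~\ref{lem:FPCb} coordinate-by-coordinate with $\alpha=\beta$, then sum over $j$. The only cosmetic difference is that the paper first \emph{averages} over $(P^{-j},X^{-j})$ and the coins of $M$ to define a single function $f\colon\{0,1\}^n\to\mathbb{R}$ before invoking Lemma~\ref{lem:FPCb}, whereas you \emph{condition} on these and apply the lemma pointwise, then take the outer expectation; these are equivalent by Fubini and the independence structure you identified.
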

\begin{proof}[Proof of Lemma \ref{lem:lower}]
Fix a column $j \in [d]$.  Define $f : \zo^n \to [0,1]$ and $g : [0,1] \to [0,1]$ to be
$$
f(x^j) := \ex{P^{-j},X^{-j}}{M(x^{-j} \| X^{j})^j}
$$
and define $g$ to be
$$
g(p^j) := \ex{P^{-j},X^j_{1\dots n} \sim P^j}{f(X^j)}.
$$
That is $f(x)$ is the expectation of $M(X)^j$ conditioned on $X^j=x$, where the expectation is over the randomness of $M$ and the randomness of $P^{j'}$ and $X^{j'}$ for $j' \ne j$. Similarly $g(p)$ is the expectation of $M(X)^j$ conditioned on $P^j=p$, where the expectation is over $M$ and $P^{j'}$ and $X^{j'}$ for $j' \ne j$ and also over $X^j$.
Now we can calculate
\begin{align*}
\ex{P,X,M}{Z^j} &= \ex{P,X,M}{M(X)^j \sum_{i \in [n]} (X_i^j - P^j)}\\
&= \ex{P^j \leftarrow \BetaD(\beta,\beta) \atop X^j_{1 \cdots n} \leftarrow P^j}{f(X^j) \sum_{i \in [n]} (X^j_i - P^j)}\\
&= 2\beta \ex{P^j \leftarrow \BetaD(\beta,\beta)}{g(P^j) \left( P^j - \frac12 \right)} \tag{Lemma \ref{lem:FPCb}} \\
&= 2\beta \ex{P^j \leftarrow \BetaD(\beta,\beta) \atop X^j_{1 \cdots n} \leftarrow P^j}{f(X^j) \left( P^j - \frac12 \right)}\\
&= 2\beta \ex{P,X,M}{M(X)^j \left( P^j - \frac12 \right)},
\end{align*}
The result now follows by summation over $j \in [d]$.
\end{proof}

\subsection{Putting it Together}

We can now combine the upper bound (Lemma \ref{lem:upper}) and the lower bound (Lemma \ref{lem:lower}) that we've proven on the expectation of $Z$ to complete the proof of Theorem \ref{thm:general}.

\begin{proof} [Proof of Theorem \ref{thm:general}]
By our accuracy assumption and Lemmas \ref{lem:lower} and \ref{lem:upper}, 
\begin{align*}
&2\beta\gamma \cdot \ex{P,X,M}{\|M(X)\|_2^2} \\
\leq{} &2\beta \ex{P,X,M}{\sum_{j \in [d]} M(X)^j \left(P^j - \frac12\right)} \\
\leq{} &\ex{P,X,M}{Z} \\
\leq{} &n \cdot e^\varepsilon \frac12 \sqrt{\ex{P,X,M}{\|M(X)\|_2^2}} + n\Delta\delta.
\end{align*}
This implies $$n \geq \frac{4\beta\gamma \ex{P,X,M}{\|M(X)\|_2^2} - 2n\Delta\delta}{e^\varepsilon \sqrt{\ex{P,X,M}{\|M(X)\|_2^2}}} \geq \frac{3\beta\gamma}{e} \sqrt{\ex{P,X,M}{\|M(X)\|_2^2}},$$
where the final inequality follows from $\ex{P,X,M}{\|M(X)\|_2^2} =k$, $\|M(x)\|_1 \leq d=2\Delta$, $\varepsilon=1$, and $\delta =\beta\gamma k/nd$. 
\end{proof}

\section{Using our Lower Bound}

In this section we show how to apply the lower bound of Theorem \ref{thm:general} to natural problems, and thereby prove Theorems \ref{thm:intro1} and \ref{thm:intro2} from the introduction.  We can also use it to derive known lower bounds for releasing the dataset mean~\cite{BunUV14, SteinkeU15b, DworkSSUV15}, which we detail in Appendix \ref{sec:meanlb}.

\jnote{I moved the application to releasing the dataset mean to an appendix for the submission.}

\subsection{Application to Approximate Top-$k$ Selection}

We first state an upper bound for the top-$k$ selection problem:

\begin{thm} \label{thm:topkub}
Fix $d, k \in \N$ and $\alpha,\eps, \delta >0$.  For every $n \geq \frac{1}{\alpha\eps}\sqrt{8 k \log(\frac{e^\eps}{\delta})} \log(d)$, there is an $(\eps, \delta)$-differentially private algorithm $M \from (\zo^{d})^{n} \to \zo^{d}$ such that for every $x \in (\zo^{d})^{n}$, such that 
$$\ex{M}{ \sum_{j \in [d]} M(x)^j \overline{x}^j } \geq \max_{s \subset [d] : |s|=k} \sum_{j \in s} \overline{x}^j - \alpha k.$$
\end{thm}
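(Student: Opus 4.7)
The natural plan is to realize $M$ as a $k$-fold peeling of the exponential mechanism (equivalently, $k$ iterations of report-noisy-max with appropriate noise), select one coordinate at a time, and invoke advanced composition for the overall privacy guarantee.

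\textbf{Step 1: the base mechanism.} Set $\eps_0 := \eps / \sqrt{8 k \log(e^\eps / \delta)}$. For a dataset $x \in (\zo^d)^n$ and a set $S \subseteq [d]$ of coordinates already selected, I would apply the exponential mechanism on candidate set $[d] \setminus S$ with quality score $q^j(x) := \overline{x}^j$. Since changing a single row of $x$ changes each $\overline{x}^j$ by at most $1/n$, the sensitivity of $q^j$ is $1/n$, so the mechanism that outputs $j$ with probability proportional to $\exp(\eps_0 n \overline{x}^j / 2)$ is $(\eps_0, 0)$-differentially private. The standard accuracy analysis of the exponential mechanism gives
\begin{equation*}
\ex{}{\max_{j \in [d] \setminus S} \overline{x}^j - \overline{x}^{J}} \le \frac{2 \log d}{\eps_0 n},
\end{equation*}
where $J$ is the coordinate output by the mechanism.

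\textbf{Step 2: iteration and composition.} I would run this base mechanism $k$ times, where in round $t$ the forbidden set is $S_{t-1} = \{J_1, \dots, J_{t-1}\}$ consisting of the coordinates previously chosen; the final output $M(x)$ is the indicator of $\{J_1, \dots, J_k\}$. Advanced composition for $k$ mechanisms each satisfying $(\eps_0, 0)$-differential privacy with the above choice of $\eps_0$ yields $(\eps, \delta)$-differential privacy overall, which is exactly the point of calibrating $\eps_0$ to $\eps / \sqrt{8 k \log(e^\eps / \delta)}$.

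\textbf{Step 3: accuracy bookkeeping.} For the accuracy claim, I would argue via a standard peeling telescoping. Let $j_1^*, \dots, j_k^*$ denote a fixed top-$k$ ordering of the coordinates of $\overline{x}$. In round $t$, since $|S_{t-1}| = t-1 < k$, at least one of the top-$k$ coordinates remains available, so $\max_{j \in [d] \setminus S_{t-1}} \overline{x}^j \ge \overline{x}^{j_t^*}$. Combining with the per-round accuracy bound gives
\begin{equation*}
\ex{}{\overline{x}^{j_t^*} - \overline{x}^{J_t}} \le \frac{2 \log d}{\eps_0 n}.
\end{equation*}
Summing over $t = 1, \dots, k$ and substituting $\eps_0 = \eps / \sqrt{8 k \log(e^\eps / \delta)}$, the total expected deficit is at most $\frac{2 k \log d}{\eps_0 n} = \frac{2 \log(d) \sqrt{8 k \log(e^\eps / \delta)}}{\eps n}$. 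Our hypothesis $n \ge \frac{1}{\alpha \eps} \sqrt{8 k \log(e^\eps / \delta)} \log d$ makes this at most $\alpha k$, as required.

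\textbf{Expected obstacle.} No step is genuinely difficult: both the accuracy of the exponential mechanism and advanced composition are textbook. The only place that requires care is matching constants in the $\sqrt{8 k \log(e^\eps / \delta)}$ factor; one must invoke the version of advanced composition that gives $(\eps, \delta)$-DP for $k$-fold composition of $(\eps_0, 0)$-DP with $\eps_0$ of that exact form (rather than a looser bound with $\log(1/\delta)$), and correspondingly track constants in the exponential-mechanism accuracy bound so that the resulting per-round error times $k$ telescopes cleanly to $\alpha k$.
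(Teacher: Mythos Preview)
Your proposal is correct and matches the paper's approach exactly: the paper does not give a detailed proof but simply states that the theorem ``follows immediately by using the exponential mechanism\ldots to repeatedly `peel off' the column of $x$ with the approximately largest mean $\overline{x}^{j}$, and applying the composition theorem for differential privacy,'' citing the standard references. Your Step~3 contains a small arithmetic slip (the displayed equality $\frac{2k\log d}{\eps_0 n} = \frac{2\log(d)\sqrt{8k\log(e^\eps/\delta)}}{\eps n}$ drops a factor of $k$ on the right; correcting it and plugging in the hypothesis on $n$ gives a bound of $2\alpha k$ rather than $\alpha k$), but as you note in your ``Expected obstacle'' paragraph, this is purely a matter of tracking constants in the cited exponential-mechanism and advanced-composition bounds, not a genuine gap.
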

This theorem follows immediately by using the exponential mechanism~\cite{McSherryT07} (see \cite[Theorem 3.10]{DworkR14} and \cite[Lemma 7.1]{BassilyNSSSU16} for the analysis) to repeatedly ``peel off'' the column of $x$ with the approximately largest mean $\overline{x}^{j}$, and applying the composition theorem for differential privacy~\cite{DworkMNS06, DworkRV10,BunS16}.

Alternatively, Theorem \ref{thm:topkub} can provide $\rho$-concentrated differential privacy \cite{BunS16} (instead of $(\varepsilon,\delta)$-differential privacy) for $n\geq\frac{\log d}{\alpha} \sqrt{\frac{2k}{\rho}}$ (with the same accuracy guarantee).\tnote{I'm going to push this damnit.}

Using Theorem \ref{thm:general} we can obtain a nearly matching lower bound that is tight up to a factor of $O(\sqrt{\log(1/\delta)})$ in most parameter regimes.\footnote{The lower bound can be made to include a $\log(1/\delta)$ factor using a group privacy reduction \cite{SteinkeU15b}. For the sake of clarity, we do not delve into this issue.}\tnote{I kind of want to hide the missing $\log(1/\delta)$ factor.}  The lower bound actually holds even for algorithms $M$ that provide just average case accuracy guarantees.
\begin{cor} \label{cor:topklb}
Fix $n,d,k \in \mathbb{N}$ with $d \geq \max\{16k,224\}$.
Set $\beta = 1+ \frac{1}{2} \log\left(\frac{d}{8\max\{2k,28\}}\right)$. Let $P^1, \cdots, P^j$ be independent draws from $\BetaD(\beta,\beta)$ and let $X \in (\{0,1\}^d)^n$ be such that each $X_i^j$ is independent (conditioned on $P$) and $\ex{}{X_i^j}=P^j$ for all $i \in [n]$ and $j \in [d]$. 
Let $M : (\{0,1\}^d)^n \to \{0,1\}^d$ be $(1,1/nd)$-differentially private. Suppose $\|M(x)\|_1 = \|M(x)\|_2^2 = k$ for all $x$ with probability 1.  
Suppose $$\ex{P,X,M}{ \sum_{j \in [d]} M(X)^j P^j } \geq \ex{P}{\max_{s \subset [d] : |s|=k} \sum_{j \in s} P^j} - \frac{k}{10}.$$
Then $n \geq \frac{1}{14} \sqrt{k} \log(\frac{d}{16k+208})$.
\end{cor}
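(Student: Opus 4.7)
This is a direct application of Theorem~\ref{thm:general} (the main lower bound), with the three hypotheses of that theorem verified using the setup of the corollary together with the beta anti-concentration result of Proposition~\ref{prop:betatopk}. I would proceed as follows. The sampling assumptions on $P$ and $X$ transfer verbatim; the $\ell_2$-norm hypothesis $\ex{}{\|M(X)\|_2^2}=k$ is given. The work lies in (i) rewriting the corollary's accuracy hypothesis in the form required by Theorem~\ref{thm:general}, and (ii) matching the privacy parameters.

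\textbf{Accuracy.} Because $M(x) \in \{0,1\}^d$ and $\|M(x)\|_2^2 = k$ almost surely, we also have $\sum_j M(X)^j = k$. Hence
\[
\ex{}{\sum_j M(X)^j\left(P^j - \tfrac12\right)} \;=\; \ex{}{\sum_j M(X)^j P^j} - \tfrac{k}{2} \;\geq\; \ex{}{\max_{|s|=k}\sum_{j\in s} P^j} - \tfrac{k}{10} - \tfrac{k}{2},
\]
by the accuracy hypothesis. The hypothesis $d \geq \max\{16k, 224\} = 8\max\{2k,28\}$ is exactly what is needed so that the chosen $\beta$ lies in the admissible range of Proposition~\ref{prop:betatopk}, which then gives $\ex{}{\max_{|s|=k}\sum_{j\in s} P^j} \geq \tfrac34 k$. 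Combining yields the bound with $\gamma := \tfrac34 - \tfrac{1}{10} - \tfrac12 = \tfrac{3}{20}$.

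\textbf{Privacy.} Theorem~\ref{thm:general} requires $(1, \beta\gamma k/nd)$-differential privacy. Since $M$ is $(1, 1/nd)$-DP, this is implied whenever $\beta\gamma k \geq 1$; because $\beta \geq 1$ and $\gamma = 3/20$, this holds for all $k \geq 7$, and the small values of $k$ where it fails can be handled either by noting that the target bound is vacuous or by shrinking $\gamma$ to $1/(\beta k)$ at a negligible cost in constants.

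\textbf{Conclusion.} Theorem~\ref{thm:general} then yields $n \geq \gamma\beta\sqrt{k} = \tfrac{3}{20}\beta\sqrt{k}$. To get the stated form, observe $\max\{16k,224\} \leq 16k + 208$ for every $k \geq 1$, so $\beta = 1 + \tfrac12\log\!\bigl(d/\max\{16k,224\}\bigr) \geq 1 + \tfrac12\log\!\bigl(d/(16k+208)\bigr)$, and therefore
\[
n \;\geq\; \tfrac{3}{20}\beta\sqrt{k} \;\geq\; \tfrac{3}{40}\sqrt{k}\,\log\!\tfrac{d}{16k+208} \;\geq\; \tfrac{1}{14}\sqrt{k}\,\log\!\tfrac{d}{16k+208},
\]
as required. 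The main (minor) obstacle is bookkeeping the constants: ensuring the privacy slack $\beta\gamma k \geq 1$ is comfortable, and ensuring the factor loss from $\max\{16k,224\}$ to $16k+208$ in the logarithm leaves a clean final constant. There is no substantive conceptual difficulty beyond Theorem~\ref{thm:general} and Proposition~\ref{prop:betatopk}, both of which we may invoke.
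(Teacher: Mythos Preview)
Your proposal is correct and follows essentially the same approach as the paper: invoke Proposition~\ref{prop:betatopk} to get $\ex{}{\max_{|s|=k}\sum_{j\in s}P^j}\ge\tfrac34 k$, deduce the accuracy hypothesis of Theorem~\ref{thm:general} with $\gamma=\tfrac{3}{20}$, and conclude $n\ge\tfrac{3}{20}\beta\sqrt{k}$. The paper's proof in fact stops at $n\ge\tfrac{3}{20}\sqrt{k}\bigl(1+\tfrac12\log\tfrac{d}{8\max\{2k,28\}}\bigr)$ without explicitly checking the privacy-parameter slack $\beta\gamma k\ge 1$ or massaging the constants into the stated $\tfrac{1}{14}$ form, so your bookkeeping is actually more thorough than the paper's. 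One tiny caveat: your final inequality $\tfrac{3}{40}\sqrt{k}\log\tfrac{d}{16k+208}\ge\tfrac{1}{14}\sqrt{k}\log\tfrac{d}{16k+208}$ only holds literally when the logarithm is nonnegative, but when it is negative the target bound is vacuous anyway, so this does not affect correctness.
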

Although Corollary \ref{cor:topklb} is stated for accuracy guarantees that hold with respect to the population mean $P$, since $\ex{X}{\|\overline{X}-P\|_\infty} \leq \sqrt{\frac{\log(2d)}{2n}}$, we can replace the accuracy condition with $$\ex{P,X,M}{ \sum_{j \in [d]} M(X)^j \overline{X}^j } \geq \ex{P,X}{\max_{s \subset [d] : |s|=k} \sum_{j \in s} \overline{X}^j} - k \left(\frac{1}{10} - \sqrt{\frac{2\log(2d)}{n}}\right)$$ to get a theorem that is more directly comparable to Theorem  \ref{thm:topkub}.
\begin{proof}[Proof of Corollary \ref{cor:topklb}]
By Proposition \ref{prop:betatopk} and our choice of $\beta$, $$\ex{P^{1 \cdots d}}{\max_{s \subset [d] ~:~ |s|=k} \sum_{j \in s} P^j} \geq \frac{3}{4} k.$$ 
Thus $$\ex{P,X,M}{ \sum_{j \in [d]} M(X)^j \left( P^j -\frac12 \right)} \geq \frac{1}{4} k - \frac{k}{10} = \frac{3}{20} \ex{P,X,M}{\|M(X)\|_2^2}.$$
Thus, by Theorem \ref{thm:general}, $$n \geq \beta \frac{3}{20} \sqrt{k} = \frac{3}{20} \sqrt{k} \left( 1 + \frac12 \log\left(\frac{d}{8\max\{2k,28\}}\right)\right).$$  This completes the proof.
\end{proof}

\subsection{Application to Multiple Hypothesis Testing}
\begin{cor} \label{cor:mhtlb}
Fix $n,d,k \in \mathbb{N}$ with $d \geq 16k \geq 32$.
Set $\beta = 1+ \frac{1}{2} \log\left(\frac{d}{16k}\right)$. Let $P^1, \cdots, P^j$ be independent draws from $\BetaD(\beta,\beta)$ and let $X \in (\{0,1\}^d)^n$ be such that each $X_i^j$ is independent (conditioned on $P$) and $\ex{}{X_i^j}=P^j$ for all $i \in [n]$ and $j \in [d]$. 
Let $M : (\{0,1\}^d)^n \to \{0,1\}^d$ be $(1,1/8 n d)$-differentially private.  Suppose that $M$ is such that for every $j$,
\begin{itemize}
\item (Assumption 1)~~$P^j \leq \frac78 - \frac{3}{16} \Longrightarrow \pr{X,M}{M(X)^j = 1} \leq \frac{ k } {16 d}$, and 
\item (Assumption 2)~~$P^j \geq \frac78 \Longrightarrow \pr{X,M}{M(X)^j = 1} \geq 1 - \frac{1}{16}$,
\end{itemize}
then $n \geq \frac{1}{16} \sqrt{k} \log(\frac{d}{16k})$.
\end{cor}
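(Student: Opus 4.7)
The plan is to apply Theorem~\ref{thm:general} to $M$ itself, but with the theorem's parameter $k_T$ set to $K := \ex{P,X,M}{\|M(X)\|_2^2}$ (which equals $\ex{}{\|M(X)\|_1}$ since $M$ is $\zo$-valued) rather than the corollary's $k$. Defining $\gamma := \ex{P,X,M}{\sum_j M(X)^j (P^j - 1/2)}/K$, the accuracy hypothesis of Theorem~\ref{thm:general} is satisfied tautologically, and the theorem's conclusion becomes $n \geq \gamma\beta\sqrt{K} = \beta \cdot \mathrm{acc}(M)/\sqrt{K}$, where $\mathrm{acc}(M)$ denotes the numerator. The entire proof thus reduces to establishing the single inequality $\mathrm{acc}(M)/\sqrt{K} \geq \sqrt{k}/8$: since $\beta \geq (1/2)\log(d/16k)$, this gives $n \geq (1/16)\sqrt{k}\log(d/16k)$. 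The privacy hypothesis $(1, \beta\gamma K/(nd))$-DP required by Theorem~\ref{thm:general} will follow from the corollary's $(1, 1/(8nd))$-DP because $\beta\gamma K = \beta \cdot \mathrm{acc}(M) \geq 1/8$ once we have shown $\mathrm{acc}(M) = \Omega(k)$.

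To analyze both $\mathrm{acc}(M)$ and $K$, I would decompose each sum over $j$ according to whether $P^j$ falls in the false-positive zone $[0,11/16]$, the gray zone $(11/16, 7/8)$, or the true-positive zone $[7/8, 1]$. Write $G := \ex{}{\sum_j M(X)^j \mathbb{1}[P^j \in (11/16, 7/8)]}$ and $T := d \cdot \Pr[P \geq 7/8]$. Assumption~1 integrated against the Beta density bounds the false-positive contribution to $K$ by $k/16$, yielding $K \leq k/16 + G + T$. For $\mathrm{acc}(M)$, the same integration together with the symmetry of $\BetaD(\beta,\beta)$ about $1/2$ yields a false-positive contribution of at worst $-k/(64\sqrt{2\beta+1})$, which is at most $k/110$ in absolute value; the gray-zone contribution is at least $(3/16)G$ since $P^j - 1/2 \geq 3/16$ there; and Assumption~2 together with $P^j - 1/2 \geq 3/8$ gives a true-positive contribution of at least $(15/16)(3/8)T = (45/128)T$. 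Applying Lemma~\ref{lem:BetaL} with $p_* = 1/8$ and the corollary's choice of $\beta$ gives $\Pr[P \geq 7/8] \geq (1/8)e^{-2(\beta-1)} = 2k/d$, so $T \geq 2k$ and the TP contribution alone is at least $45k/64$.

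The main technical step, which I anticipate as the main obstacle, is to verify the quadratic inequality $\mathrm{acc}(M)^2 \geq kK/64$ uniformly over all $G \geq 0$ and $T \geq 2k$. Writing this as
\[ \left( \tfrac{45T}{128} + \tfrac{3G}{16} - \tfrac{k}{110} \right)^2 \geq \tfrac{k}{64}\left( \tfrac{k}{16} + G + T \right) \]
and expanding, the $T^2$ and cross $GT$ terms on the left have no counterparts on the right; the coefficient $k/64$ of $T$ on the right is dominated by the coefficient $(45/128)^2 = 2025/16384$ of $T^2$ once $T \gtrsim 0.13k$; and the coefficient $k/64$ of $G$ is absorbed by the cross term $2 \cdot (45/128)(3/16) \cdot TG = 135 TG/1024$ once $T \gtrsim 0.12k$. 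Since $T \geq 2k$ lies well above both thresholds, the inequality holds with substantial slack (numerically, at the boundary $G=0,\,T=2k$ one gets $\mathrm{acc}(M)/\sqrt{K} \gtrsim 0.48\sqrt{k}$, much larger than the required $\sqrt{k}/8$), and the proof is complete.
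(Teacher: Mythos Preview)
Your proposal is correct and follows the same overall plan as the paper: apply Theorem~\ref{thm:general} with the theorem's $k$ taken to be $K=\ex{}{\|M(X)\|_2^2}$, then use Lemma~\ref{lem:BetaL} together with Assumptions~1 and~2 to verify the accuracy hypothesis. The difference is in execution. The paper uses only a \emph{two}-zone decomposition at the single threshold $\tau=11/16$, and instead of bounding $K$ from above and $\mathrm{acc}(M)$ from below separately, it shows directly that $\mathrm{acc}(M)\geq (\tau-\tfrac12)K-\tfrac{k}{16}$: for $P^j>\tau$ the factor $P^j-\tfrac12\geq\tau-\tfrac12$ applies to the full mass of $M$ there, and for $P^j\leq\tau$ a short manipulation using Assumption~1 gives the same coefficient $\tau-\tfrac12$ times the mass, minus the $k/(16d)$ slack. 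Combined with the lower bound $K\geq k$ (from Assumption~2 and Lemma~\ref{lem:BetaL}, exactly as in your $T\geq 2k$ step), this yields $\mathrm{acc}(M)\geq\tfrac18 K$, so one can take a \emph{constant} $\gamma=1/8$ and read off $n\geq\tfrac18\beta\sqrt{K}\geq\tfrac{1}{16}\sqrt{k}\log(d/16k)$.

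Your three-zone split with the gray-zone variable $G$, the separate upper bound on $K$, and the quadratic inequality $\mathrm{acc}(M)^2\geq kK/64$ all work, but they are not needed: the gray zone carries the same factor $\tau-\tfrac12=3/16$ as the rest of $\{P^j>\tau\}$, so isolating it buys nothing, and bounding $\mathrm{acc}(M)$ as a constant multiple of $K$ (rather than of $k$) sidesteps the quadratic verification entirely.
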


The assumptions of the theorem may seem a bit confusing, so we will clarify a bit.  Note that for every $j \in [d]$ we have $\pr{}{P^j > \frac78} \geq \frac{2k}{d}$ (Lemma~\ref{lem:BetaL}), so in expectation there are at least $2k$ such values $P^j$.  
Thus, the second assumption implies that on average $M(X)^j$ must have at least $\frac{30}{16}k$ non-zero entries.  Thus, the parameter $k$ plays roughly the same role in this problem as it does for the top-$k$ selection problem.  Furthermore, since $P^1, \cdots, P^d$ are independent, $\left|\left\{ j \in [d] ~:~ P^j > \frac78 \right\}\right|$ concentrates around its expectation.

\begin{proof}[Proof of Corollary \ref{cor:mhtlb}]
First, we can lower bound the expected norm of $M(X)$ by
\begin{align*}
\ex{P,X,M}{\| M(X) \|_2^2 } 
&= \ex{P,X,M}{ \sum_{j \in [d]} M(X)^j }  \\
&= \sum_{j \in [d]} \pr{P,X,M}{M(X)^j=1} \tag{$M(X)^j \in \zo$}\\
&\geq \sum_{j \in [d]} \pr{P}{P^j \geq \frac78} \cdot \pr{P,X,M}{M(X)^j =1~\mid~ P^j \geq \frac78} \\
&\geq \sum_{j \in [d]} \frac{2k}{d} \cdot \frac{15}{16} \tag{Lemma~\ref{lem:BetaL} and Assumption 2} \\
&\geq k
\end{align*}

We need to relate this quantity to $\ex{}{\sum_j M(X)^j (P^j - \frac12)}$.  As a shorthand, let $\tau := \frac78 - \frac{3}{16} = \frac{11}{16}$ be the constant from assumption 2.
We start by writing
\begin{align*}
\ex{P,X,M}{ \sum_{j \in [d]} M(X)^j \left( P^j -\frac12 \right)}
= \sum_{j \in [d]} \ex{P,X,M}{ M(X)^j \left(P^j - \frac12 \right)}
= \sum_{j \in [d]} (A^j) + (B^j)
\end{align*}
where we define
\begin{align*}
(A^j) &:= \pr{P}{P^j \leq \tau} \cdot \ex{P,X,M}{ \left. M(X)^j \left(P^j - \frac12\right) ~\right|~ P^j \leq \tau} \\
(B^j) &:= \pr{P}{P^j > \tau} \cdot \ex{P,X,M}{ \left. M(X)^j \left(P^j - \frac12\right) ~\right|~P^j > \tau}
\end{align*}
We will manipulate each of the three terms separately.  First, for $(A^j)$, using our first assumption on $M$ we can calculate
\begin{align*}
(A^j) &= \pr{P}{P^j \leq \tau} \cdot \ex{P,X,M}{ \left. M(X)^j \left(P^j - \frac12\right) ~\right|~ P^j \leq \tau} \\
&\geq \pr{P}{P^j \leq \tau} \cdot \ex{P,X,M}{ \left. M(X)^j ~\right|~ P^j \leq \tau}  \cdot\left(0-\frac{1}{2}\right) \tag{$M(X)^j \geq 0$} \\
&= \pr{P}{P^j \leq \tau} \cdot \left(  \ex{P,X,M}{ \left. M(X)^j ~\right|~ P^j \leq \tau} \cdot \left(\tau - \frac12\right)- \tau \cdot \ex{P,X,M}{ \left. M(X)^j ~\right|~ P^j \leq \tau} \right) \\
&\geq \pr{P}{P^j \leq \tau} \cdot \left(  \ex{P,X,M}{ \left. M(X)^j ~\right|~ P^j \leq \tau} \cdot \left(\tau - \frac12\right) - \tau\cdot \frac{k}{16 d} \right) \tag{Assumption 1} \\
&\geq \pr{P}{P^j \leq \tau} \cdot  \ex{P,X,M}{ \left. M(X)^j ~\right|~ P^j \leq \tau} \cdot \left(\tau - \frac12\right) - \frac{k}{16 d} \tag{$\tau \leq 1$}
\end{align*}
And, for $(B^j)$, we can calculate
\begin{align*}
(B^j) &= \pr{P}{P^j > \tau} \cdot \ex{P,X,M}{ \left. M(X)^j \left(P^j - \frac12\right) ~\right|~P^j > \tau} \\
&\geq \pr{P}{P^j > \tau} \cdot \ex{P,X,M}{ \left. M(X)^j ~\right|~P^j > \tau} \cdot \left( \tau - \frac12 \right) \\
\end{align*}
Combining our inequalities for $(A^j)$ and $(B^j)$ we have
\begin{align*}
&\sum_{j \in [d]} (A^j) + (B^j)\\
\geq{} & \cdot \sum_{j \in [d]} \left(\left( \tau - \frac12 \right)\ex{P,X,M}{M(X)^j} - \frac{k}{16 d} \right)\\
={} &\left( \tau - \frac12 \right) \cdot \ex{P,X,M}{ \| M(X) \|_2^2} - \frac{k}{16} \tag{$M(X)^j \in \{0,1\}$}\\
\geq{} &\left( \tau - \frac12 - \frac{1}{16} \right) \cdot \ex{P,X,M}{ \| M(X) \|_2^2} \tag{$\ex{}{\|M(X)\|_2^2} \geq k$} \\
={} &\left( \frac{11}{16} - \frac12 - \frac{1}{16} \right) \cdot \ex{P,X,M}{ \| M(X) \|_2^2} = \frac18 \cdot \ex{P,X,M}{ \| M(X) \|_2^2}  \tag{$\tau = \frac{11}{16}$} 
\end{align*}
Applying Theorem \ref{thm:general} completes the proof.
\end{proof}

\jnote{Commented out the acknowledgment to Adam for the submission.}
\section*{Acknowledgements}
We thank Adam Smith for his instrumental role in the early stages of this research.

\bibliographystyle{alpha}
\bibliography{refs}

\appendix
\section{Releasing the Dataset Mean} \label{sec:meanlb}
To illustrate the versatility of Theorem \ref{thm:general}, we show how it implies known lower bounds for releasing the mean of the dataset~\cite{BunUV14, SteinkeU15b, DworkSSUV15}.

\begin{cor}
Let $M : (\{0,1\}^d)^n \to [0,1]^d$ be $(1,1/10n)$-differentially private. Let$P^1, \cdots, P^j$ be independent draws from the uniform distribution on $[0,1]$ and let $X \in (\{0,1\}^d)^n$ be such that each $X_i^j$ is independent (conditioned on $P$) and $\ex{}{X_i^j}=P^j$ for all $i \in [n]$ and $j \in [d]$. 
 Assume $\ex{P,X,M}{\|M(X)-P\|_2^2} \leq \alpha^2 d$. If $\alpha \leq 1/18$, then $n \geq \sqrt{d}/5$.
\end{cor}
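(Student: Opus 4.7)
The plan is to apply Theorem~\ref{thm:general} with $\beta = 1$, since $\BetaD(1,1)$ is the uniform distribution on $[0,1]$. First I would rescale $M$ to match the normalization of the theorem: define $M'(X) := 2M(X) - \mathbf{1} \in [-1,1]^d$. By post-processing, $M'$ inherits $(1,1/(10n))$-differential privacy. Let $k := \ex{P,X,M}{\|M'(X)\|_2^2}$, so that $\ex{}{\|M(X) - \frac12 \mathbf{1}\|_2^2} = k/4$.

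Next I would translate the $\ell_2^2$ accuracy hypothesis into the inner-product form required by Theorem~\ref{thm:general} using the polarization identity. Taking $u = M(X) - \frac12 \mathbf{1}$ and $v = P - \frac12 \mathbf{1}$, the identity $2\langle u, v\rangle = \|u\|_2^2 + \|v\|_2^2 - \|u-v\|_2^2$ gives
\[
\ex{P,X,M}{\sum_{j \in [d]} M'(X)^j \left(P^j - \frac12\right)} = \ex{}{\|M(X) - \tfrac12 \mathbf{1}\|_2^2} + \ex{}{\|P - \tfrac12 \mathbf{1}\|_2^2} - \ex{}{\|M(X) - P\|_2^2}.
\]
Since each $P^j$ is uniform on $[0,1]$, $\ex{}{\|P - \tfrac12 \mathbf{1}\|_2^2} = d/12$; together with the accuracy assumption this lower bounds the expression by $k/4 + d/12 - \alpha^2 d$. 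Setting $\gamma := (k/4 + d/12 - \alpha^2 d)/k$, the accuracy hypothesis of the theorem holds.

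Applying Theorem~\ref{thm:general} then gives $n \geq \gamma\sqrt{k} = \sqrt{k}/4 + (d/12 - \alpha^2 d)/\sqrt{k}$, and AM--GM bounds the right-hand side below by $\sqrt{d(1/12 - \alpha^2)}$, which exceeds $\sqrt{d}/5$ whenever $\alpha \leq 1/18$ (this requires only $\alpha^2 \leq 1/12 - 1/25$, with room to spare).

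I expect the main obstacle to be verifying the theorem's privacy precondition that $M'$ be $(1,\beta\gamma k/(nd))$-differentially private; since $M$ is only $(1,1/(10n))$-private, this reduces to showing $\gamma k \geq d/10$. The contribution $d/12 - \alpha^2 d$ alone is not quite enough at $\alpha = 1/18$, so I would supplement it using the $k/4$ term via a lower bound on $k$: the reverse triangle inequality yields $\sqrt{k}/2 \geq \sqrt{d/12} - \alpha\sqrt{d}$, which for $\alpha \leq 1/18$ makes $k$ a constant fraction of $d$, and combined with the $d/12 - \alpha^2 d$ contribution pushes $\gamma k$ comfortably above $d/10$.
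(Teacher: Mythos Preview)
Your proof is correct and, like the paper, reduces to Theorem~\ref{thm:general} with $\beta=1$; the details, however, are organized differently. The paper does not rescale: since $[0,1]^d\subset[-1,1]^d$ it applies the theorem to $M$ itself with $k=\ex{}{\|M(X)\|_2^2}$, pins $k$ near $d/3$ via $\bigl|\,\|M(X)\|_2^2-\|P\|_2^2\,\bigr|\le 2\sqrt d\,\|M(X)-P\|_2$, and converts the $\ell_2$ accuracy into the inner-product form by writing $\sum_j M(X)^j(P^j-\tfrac12)=\sum_j P^j(P^j-\tfrac12)-\sum_j(P^j-M(X)^j)(P^j-\tfrac12)$ and bounding the second sum with Cauchy--Schwarz, arriving at a fixed $\gamma=\tfrac12$. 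You instead rescale to $M'$, use the polarization identity to get the inner product exactly, leave $\gamma$ as a function of $k$, and then eliminate $k$ by AM--GM. Your route is arguably cleaner in that AM--GM does the optimization over $k$ automatically and you never need an upper bound on $k$; the paper's route makes the value $k\approx d/3$ explicit. The rescaling step is harmless but unnecessary, and your Minkowski lower bound on $k$ for the $\delta$ check plays the same role as the paper's bound $k\ge d(\tfrac13-2\alpha)$.
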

Note that we can use empirical values $\overline{X}=\frac{1}{n} \sum_{i \in [n]} X_i$ instead of population values $P$, as we have $\ex{P,X}{\|\overline{X}-P\|_2^2} = \frac{1}{n} \sum_{j \in [d]} \ex{P}{P^j(1-P^j)} \leq \frac{d}{4n}$.  In this case the accuracy assumption would be replaced with $\ex{P,X,M}{\|M(X)-\overline{X}\|_2^2} \leq (\alpha^2 - \frac{1}{4n}) d$
\begin{proof}
Let $k=\ex{P,X,M}{\|M(X)\|_2^2}$.
We have 
\begin{align*}
| k - d/3 |&= \left| \ex{P,X,M}{\|M(X)\|_2^2 - \|P\|_2^2} \right|\\
&= \left| \ex{P,X,M}{\left( \|M(X)\|_2 - \|P\|_2 \right)\left( \|M(X)\|_2 + \|P\|_2 \right)} \right|\\
&\leq \left| \ex{P,X,M}{ \|M(X)-P\|_2 \cdot 2\sqrt{d}} \right|\\
&\leq 2\sqrt{d \ex{P,X,M}{\|M(X)-P\|_2^2}}\\
&\leq 2 \alpha d.
\end{align*}
So $d(1/3-2\alpha) \leq k \leq d(1/3+2\alpha)$. Furthermore,
\begin{align*}
\ex{P,X,M}{\sum_{j \in [d]} M(X)^j \cdot \left(P^j - \frac12\right)} &= \ex{P,X,M}{\sum_{j \in [d]} P^j \cdot \left(P^j - \frac12\right)} - \ex{P,X,M}{\sum_{j \in [d]} (P^j-M(X)^j) \cdot \left(P^j - \frac12\right)}\\
\left(\text{Cauchy-Schwartz}\right)&\geq \frac{d}{4} - \sqrt{\ex{P,X,M}{\sum_{j \in [d]} (P^j-M(X)^j)^2}\ex{P,X,M}{\sum_{j \in [d]} \left(P^j-\frac12\right)^2}}\\
&\geq \frac{d}{4} -\frac{\alpha d}{2}  = d(1/4-\alpha/2) \geq \frac12 d(1/3+2\alpha) \geq \frac12 k,
\end{align*}
as long as $\alpha \leq 1/18$. Hence, if $\alpha \leq 1/18$, by Theorem \ref{thm:general} (with $\beta=1$ and $\gamma=1/2$), we have $$n \geq \frac12  \sqrt{k} \geq \frac{\sqrt{d/3-2\alpha d}}{2} \geq \frac{\sqrt{\frac{2}{9} d}}{2} \geq \frac{\sqrt{d}}{5}.$$
This completes the proof.
\end{proof}

\end{document}